\newenvironment{varsubequations}[1]
 {%
  \addtocounter{equation}{-1}%
  \begin{subequations}
  \def\@currentlabel{#1}%
 }
 {%
  \end{subequations}\ignorespacesafterend
 }
	\newcommand{\equationVar}[2]{\begin{varsubequations}{#1-}\label{eq:#1}\begin{align}#2\end{align}\end{varsubequations}}
\newtheorem{theorem}{Theorem}
\newtheorem{proposition}{Proposition}
\newtheorem{lemma}{Lemma}
\newtheorem{corollary}{Corollary}
\newtheorem{remark}{Remark}
\newcommand{\Expect}{{\rm I\kern-.3em E}}
\newtheorem{example}{{\em Example}}
\newtheorem{prb}{Problem}
\author{Eleftherios Lampiris, Berksan Serbetci, Thrasyvoulos Spyropoulos, Giuseppe Caire, Petros Elia
\thanks{

E. Lampiris, B. Serbetci, T. Spyropoulos and P. Elia are with the Communication Systems Department of EURECOM, 06410 Sophia Antipolis, France, email: \{lampiris, serbetci, spyropou, elia\}@eurecom.fr.\nocite{maddah2014fundamental}

G. Caire is with the Communications and Information Theory Group (CommIT) of the Technical University of Berlin, 10587 Berlin, Germany, email: caire@tu-berlin.de.

E. Lampiris was previously with the Technical University of Berlin.

The work is supported by French National Research Agency (ANR) under the “5C-for-5G” JCJC project with reference number ANR-17-CE25-0001, by the ERC project DUALITY (grant agreement no. 725929), and by the ERC project CARENET (grant agreement no. 789190). Parts of this work have been published in the 18th International Symposium on Modeling and Optimization in Mobile, Ad Hoc and Wireless Networks (WiOpt 2020) \cite{serbetciAugmentingWiOpt2020}.}
}
\title{Multi-Transmitter Coded Caching Networks with Transmitter-side Knowledge of File Popularity}
\date{}
\begin{document}

\maketitle

\begin{abstract}
This work presents a new way of exploiting non-uniform file popularity in coded caching networks. Focusing on a fully-connected fully-interfering wireless setting with multiple cache-enabled transmitters and receivers, we show how non-uniform file popularity can be used very efficiently to accelerate the impact of transmitter-side data redundancy on receiver-side coded caching.
This approach is motivated by the recent discovery that, under any realistic file-size constraint, having content appear in multiple transmitters can in fact dramatically boost the speed-up factor attributed to coded caching.

We formulate an optimization problem that exploits file popularity to optimize the placement of files at the transmitters. We then provide  a proof that reduces significantly the variable search space, and propose a new search algorithm that solves the problem at hand. We also prove an analytical performance upper bound, which is in fact met by our algorithm in the regime of many receivers. Our work reflects the benefits of allocating higher cache redundancy to more popular files, but also reflects a law of diminishing returns where for example very popular files may in fact benefit from minimum redundancy.
In the end, this work reveals that in the context of coded caching, employing multiple transmitters can be a catalyst in fully exploiting file popularity, as it avoids various asymmetry complications that appear when file popularity is used to alter the receiver-side cache placement.
\end{abstract}

\section{Introduction}
In the context of cache-aided, interference-limited communication networks, the work of Maddah-Ali and Niesen~\cite{maddah2014fundamental} revealed how content that is properly placed at the caches of the receivers, can be used as side information to cancel interference and reduce delivery time. 

In particular, the work in~\cite{maddah2014fundamental} considers a single-antenna broadcast (downlink) configuration, where a transmitter has access to a library of $N$ files, each of size $F$ bits.
The transmitter serves---via a unit-capacity bottleneck link---a set of $K$ receiving users, each endowed with a cache of size $M\cdot F$ bits, corresponding to a fraction $\gamma\triangleq \frac{M}{N}$ of the library. The setting involves a \emph{cache-placement phase} where the caches are filled with content in a manner oblivious to future demands, and then a subsequent \emph{delivery phase} which starts with each user simultaneously demanding an independent file.

By exploiting \emph{content redundancy} where each bit of data can be placed at $KM/N = K\gamma$ users, the algorithm in \cite{maddah2014fundamental} could multicast different messages to $K\gamma + 1$ users at a time because each receiver could cancel the interference by accessing their own cache. This speedup factor of $K\gamma+1$ is commonly referred to as the Degrees-of-Freedom (DoF) performance, and it implies a worst-case delivery time\footnote{This is the normalized time that guarantees the successful delivery of all requested files, independent of the file-demand pattern.} equal to
\begin{equation}\label{eqMNdelay}
	T_{\text{MN}}=\frac{K(1-\gamma)}{1+K\gamma} \stackrel{K\to \infty}{=} \frac{1-\gamma}{\gamma}.
\end{equation}
The above delay was shown in~\cite{yuFactorOf2TransIT2019} to be within a multiplicative factor of at most $2.01$ from the information-theoretic optimal, and to be exactly optimal over the class of schemes that employ uncoded cache placement~\cite{wanUncodedPlacementTIT2020,yuExactUncodedTransIT2018}.

\paragraph*{Subpacketization and the redundancy constraint}
The above unbounded gain is in practice infeasible, mainly because it requires each file to be divided (subpacketized) into at least $K\choose K\gamma$ subfiles. Having files that do not scale exponentially in $K$, constitutes a prohibitive fundamental bottleneck \cite{shangguanHypergraphsTransIT2018,chittoorSubexponentialCCArXiv2020} which hard-bounds the DoF at very modest values\footnote{Some interesting progress on this, can be found in ~\cite{yanPDATransIT2017,tangSubpacketizationTransIT2018,shangguanHypergraphsTransIT2018,chittoorSubexponentialCCArXiv2020,shanmugamRuzsaGraphsISIT2017}.}.

A simple way to abide by the file-size constraint, is simply to assign the same cache content to entire groups of users (cf.\cite{shanmugamFiniteLengthTransIT2016}).
With this number of groups $\Lambda$ being constrained as
\begin{equation}\label{eqLambdaSubpacketization}
\binom{\Lambda}{\Lambda\gamma} \le F,
\end{equation}
the placement algorithm of \cite{maddah2014fundamental} is used to create $\Lambda$ different caches, and to assign the same cache to all the users belonging to the same group. Then, to satisfy the user demands, the delivery algorithm of \cite{maddah2014fundamental} --- which now enjoys a reduced DoF $\Lambda\gamma+1$ --- is repeated $\frac{K}{\Lambda}$ times, resulting in a delivery time of
\begin{equation} \label{eq:lambdaDelay}
	T_{\Lambda} = \frac{K(1-\gamma)}{1+\Lambda\gamma}.
\end{equation}

\paragraph*{Coded caching with transmitter-side cache redundancy}

As it turns out, the above subpacketization bottleneck is intimately connected, not only to the content redundancy $K\gamma$ at the receiver side, but also at the transmitter side.  This connection was made in~\cite{lampirisSubpacketizationJSAC} which --- in the context of multiple transmitting nodes (see~\cite{naderializadehFundamentalTransIT,shariatpanahiMultiServerTransIT,ZFE:15,piovanoOverloaded2017ISIT,zhangDelayedCSITtransIT2017,lampirisSubpacketizationCsitSPAWC,lampirisNoCsitISIT}) --- employed a novel fusion of coded caching and multi-antenna precoding, to dramatically reduce the subpacketization requirements of coded caching, and in the process to show that having multiple transmitter-side redundancy can in fact multiplicatively boost the caching gain.
In particular, for the coded caching scenario in~\cite{naderializadehFundamentalTransIT} (see also~\cite{shariatpanahiMultiServerTransIT}) where the $K$ receivers are served by $K_{T}$ transmitters each having access to a cache of normalized size $\gamma_{T}\in [ \frac{1}{K_{T}}, 1]$, the work in~\cite{lampirisSubpacketizationJSAC} showed that for $L\triangleq K_{T}\gamma_{T}$, and under the subpacketization of \eqref{eqLambdaSubpacketization} with $\Lambda\le \frac{K}{L}$, then one can get a dramatically decreased delivery time of
\begin{equation} \label{eq:DelayUniform}
	T = \frac{K(1-\gamma)}{L(1+\Lambda\gamma)}
\end{equation}
which is optimal under the assumption of uncoded placement~\cite{ParUnsEli2019}.

The above performance is achieved when each library file enjoys, on the transmitter side, an identical cache-redundancy equal to $L$, i.e. each file is cache at \textit{exactly} $L$ transmitters.
In our current work here, we propose and explore the endowing of some (generally more popular) files, with higher redundancy than their less popular counterparts. As we will see, this approach will not only improve performance, but will also allow us to utilize file popularity without breaking the symmetry of coded caching, as can often be the case when file popularity is used to alter the placement at the receiver side. This will become clearer below when we recall some existing methods of utilizing this knowledge.

\paragraph*{File popularity in coded caching, and the problem of symmetry}
Before recalling how non-uniform file popularity has been used in coded-caching, let us quickly recall that exploiting file popularity has been a key concept from the early works of Content Delivery Network systems~\cite{BorstCDN,sainoShardedToN2020}, Content-Centric and Information-Centric Networks~\cite{ZhangICN,carraElasticToN2020}, multi-tier networks \cite{alAbbasiMultiTierToN2019}, as well as in wireless edge caching works \cite{PaschosarticleJSAC} that followed the femto-caching ideas of \cite{shanmugamFemtocaching2013TransIT}. Such works generally focus on exploiting caches to `prefetch' content, and have little to do with using caches to handle network interference.  Even works that do consider PHY-aspects like multi-antenna beamforming, often assume that transmissions are, in essence, non-interfering~\cite{Tuholukova2017OptimalCA,AoPsounis2015}.

The connection between caching and interference management was mainly explored by works capitalizing on the interplay between coded caching and multiple transmitters, which initially focused on worst-case metrics, thus neglecting the effects of non-uniform file popularity.
Recently a variety of works such as \cite{niesenNonUniform17TransIT,zhangPopularity2018TransIT,hachemHeterogeneousPopularityInfocom2015,dingImprovedCachingICCC2017,OzfaturaCrossLevelICC2018,quintonNovelNonUniformCachingIZS2018,jinUncodedPlacementPopularityISIT2018,SaberaliOptimalNonUniformTransIT2019,jiOrderOptimalTIT2017,ramakrishnanEfficientITC2015,dengSubpacketizationNonUniformAsilomar2019,ChangHeterogeneousFilesDemandsISIT2019,AlLawatiNonIdenticalCWIT2017},
explored different ways of exploiting this popularity in the single-stream coded caching model.
As these efforts progressed, it was soon realized that incorporating file popularity in coded caching, brings to the fore a certain non-beneficial asymmetry which we discuss below.

In general, knowing the file popularity, would allow the grouping of similarly popular files, in order to allocate more cache space to popular files, thus leading to higher redundancy for more popular files and faster delivery.
Given, though, the multicast nature of coded caching, this approach brings to the fore the dilemma of whether or not multicast delivery messages should combine content from files that are dissimilar in terms of popularity.
This is an important dilemma with serious ramifications.
Choosing to not encode across different sub-libraries negates the very idea of coded caching, which benefits from encoding over as many users as possible. After all, as we have discussed, the gains of coded caching are proportional to how many users/files one encodes over. Instead, here, not encoding across sub-libraries, forces algorithms to separately deliver one sub-library after the other, which is a time-consuming process.
On the other hand, choosing to have popular and unpopular (sub) files coexist in a single transmission, can suffer from a certain asymmetry in the size of the composite subfiles. In principle, popular subfiles will tend to be longer than unpopular ones\footnote{This goes back to having designated more cache space for popular files, which often implies that the multicast messages will carry popular subfiles that are larger than their unpopular counterparts.}. This can in turn force very substantial zero-padding, which implies that only a fraction of the delivered bits actually corresponds to real content.

Drawing from the first paradigm, different works~\cite{niesenNonUniform17TransIT,zhangPopularity2018TransIT,hachemHeterogeneousPopularityInfocom2015} consider multicast messages (taken from~\cite{maddah2014fundamental}) which are composed of content from only one sub-library at a time. As was nicely shown in~\cite{zhangPopularity2018TransIT}, this approach provides for a bounded gap to the information-theoretic optimal\footnote{The multiplicative gap is approximately 50. Naturally the metric is the average delivery time, averaged over all demands.}. 
This gap in~\cite{zhangPopularity2018TransIT} was shown in~\cite{dingImprovedCachingICCC2017} to vanish for the special case of $K=2$.

Following the second paradigm, works such as~\cite{OzfaturaCrossLevelICC2018,quintonNovelNonUniformCachingIZS2018,jinUncodedPlacementPopularityISIT2018,SaberaliOptimalNonUniformTransIT2019} facilitate coding across sub-libraries, after optimizing the amount of cache each file can occupy as a function of its popularity. Interestingly, in some cases such as in~\cite{SaberaliOptimalNonUniformTransIT2019,zhangPopularity2018TransIT}, the optimization suggests --- under certain very important assumptions --- the need for only a very small number of sub-libraries. A similar conclusion was drawn in~\cite{dengSubpacketizationNonUniformAsilomar2019} for a decentralized setting\footnote{Coded caching placement strategies are divided into two broad categories, the centralized and the decentralized. Centralized placement, proposed in \cite{maddah2014fundamental}, assumes that the identity of the users is known during the placement phase and provides a deterministic caching strategy. In contrast, decentralized placement \cite{maddahDecentralizedToN2015} assumes that the identity of the users is unknown during the placement and as such the caching strategy is probabilistic, i.e., each chunk of a file is cached with a specified probability.}. Another interesting decentralized approach can be found in~\cite{jiOrderOptimalTIT2017} which combines a popularity-aware placement with a clique-cover delivery algorithm, to achieve --- under the assumption of a Zipf distribution (cf.~\cite{Newman2005Zipf}) and in the limit of large $K$ and large $K\gamma$ --- an order optimal performance\footnote{This means that, in the limit of infinite $K$, the gap to optimal is finite.}.
This performance was further improved in~\cite{ramakrishnanEfficientITC2015} which presented a delivery algorithm based on index coding (cf.~\cite{yossefIndexCodingTransIT2011}).

\subsection*{Current contribution: Boosting the impact of transmitter-side data redundancy using file popularity}

The dramatic impact of transmitter-side cache redundancy in coded caching, together with the aforementioned problem of symmetry, are two main motivating factors of our work. 
Focusing on a setting with $K_{T}$ cache-aided transmitters tasked with serving $K$ cache-aided receiving users, we explore the effect of allowing different files to experience different transmitter-side redundancies, depending on their popularity. In the context of coded caching, this constitutes a novel approach that allows us to benefit from a non-uniform file popularity, while having receivers that are agnostic to this popularity.

The main objective is to optimally divide the library into an arbitrary number of non-overlapping sub-libraries, and then to optimize the number of transmitters allocated to the files of any given sub-library.
As we will show in the main part of this work, finding the optimal solution to this placement proves to be a hard optimization problem. By solving this problem, we offer a multiplicative performance boost compared to the uniform popularity scenario (cf.~\eqref{eq:DelayUniform}), as well as a number of additional significant advantages compared to the state of art.

$\bullet$ A first advantage is that the receiver-side placement remains agnostic to file popularity. This allows the network to easily adapt to possible changes in file popularity, because updating the caches of a modest number of centralized transmitters is much easier than doing so for a large number of distributed receivers.

$\bullet$ Additionally, as we discussed earlier, popularity-aware receiver-side caching requires i) an accurate knowledge of the users that will be active during the placement phase and, ii) creates sub-file asymmetries which reduce the resulting gains.

$\bullet$ Finally, the adopted receiver-side placement strategy does not require the identity of the users to be known during the placement phase.

The work provides interesting insights. While it is beneficial to allocate higher cache redundancy to popular files (so that the majority of requests experience higher DoF performance), this has to be done with caution because after a certain point a law of diminishing returns kicks in. 
This is particularly true for very popular files, where---as we will see---a minimum redundancy is beneficial.

In the end, a key ingredient in our work is the fact that files do not have unbounded sizes. This may seem like an esoteric detail, but is in fact at the core of many coded caching problems. In our particular problem, having finite file sizes is what makes the impact of transmitter-side redundancy so powerful, and thus what motivates us to optimize this redundancy.

\paragraph*{Paper outline}In Section~\ref{secSystemModel} we present the system model and the notation. In Section~\ref{secCachingDeliveryPolicies} we discuss the caching and delivery algorithms as well as the optimization problem that we seek to solve. Further, in Section~\ref{secSolutionOverview} we first provide a proof that reduces significantly the variable search space and then we describe a novel algorithm that solves the optimization problem. In
Section~\ref{secSpeedingUp} we calculate a theoretical limit to the performance of our setting, while we prove that the reduced variable search space has the added benefit of providing an increased performance under any choice of variables.
Finally, in Section~\ref{sec:numev} we evaluate numerically the algorithm by plotting the multiplicative performance increase, compared to the uniform popularity case, as a function of the Zipf parameter $\alpha$ and for various number of users $K$.

\section{System Model \& Notation}\label{secSystemModel}

We consider the fully-connected, $K_{T}$-transmitter cache-aided setting, where $K_{T}$ single-antenna transmitters serve $K$ single-antenna receivers. Each transmitter and each receiver can store fraction $\gamma_{T}\in[\frac{1}{K_{T}},1]$ and fraction $\gamma\in[0,1]$ of the library, respectively. We assume that the library is comprised of $N$ files $W^1,W^2,\dots,W^N$, and that each file has size $F$ bits\footnote{This assumption is common in the literature, as non-equally sized files can be handled by making a content chunk the basic caching unit, as in~\cite{shanmugamFemtocaching2013TransIT}.} and is of finite size.
We assume that the system operates in the high Signal-to-Noise-Ratio region and that a single transmitter-to-receiver link has (normalized) capacity equal to one file per unit of time, as well as that the channel between any set of transmitters and receivers is of full rank with probability one\footnote{This requirement holds true in many wireless settings, as well as in wired settings with network-coding capabilities at the intermediate network nodes.}.

The caches of the transmitters and the receivers are filled with content during the placement phase.
During the delivery phase, each user will concurrently request a single file, and we assume that these requests follow a file popularity distribution that is known during the cache placement.
{In particular, we will focus on file popularity that follows the Zipf distribution~\cite{Newman2005Zipf} with parameter $\alpha>0$, under which the probability that file $W^n$ is requested, takes the form
\begin{align}
	p_{n} = \frac{n^{-\alpha}}{\sum_{k=1}^N k^{-\alpha}}, ~\forall n\in\{1,..., N\}.\label{eq:zipf}
\end{align}}

\subsubsection*{Notation} Symbols $\mathbb{N}, \mathbb{R}, \mathbb{C}$ denote the sets of natural, real and complex numbers, respectively. For $n,k\in\mathbb{N},~n\geq k$, we denote the binomial coefficient with $\binom{n}{k}$, while $[k]$ denotes the set $\{1,2, ..., k\}$.  We use $| \cdot |$ to denote the cardinality of a set. Bold letters are reserved for vectors, while for some vector $\mathbf{h}$, comprised of $Q$ elements, we denote its elements as~$h_{q}$,~$q\in[Q]$, i.e., $\mathbf{h}^{T} \triangleq [ h_{1}, h_{2}, ..., h_{Q}]$.

\section{Caching and delivery policies \& main problem\label{secCachingDeliveryPolicies}}

As suggested above, the caching policy at the transmitter-side is popularity-aware, while the receiver-side placement is not.
We begin with the general description of the transmitter-side caching policy and further describe the placement policy at the caches of the receivers.
Given these, we continue with the delivery algorithm, which is based on the algorithm of \cite{lampirisSubpacketizationJSAC}.
The last part of this section is dedicated to the presentation of the optimization problem that assigns content to the caches of the transmitters.

\subsection{Caching and delivery policies}

\subsubsection{Transmitter-side caching policy}

We segment the library into $Q$ non-overlapping sub-libraries.
Such segmentation is described via sets $\mathcal{B}_{q}\subset [N]$, $q\in[Q]$, and signifies that all files belonging to the same library would be assigned the same transmitter-side cache redundancy $L_{q}\in[1,K_{T}]$.
In other words, each file of sub-library $\mathcal{B}_{q}$ will be stored at exactly $L_{q}$ different transmitters.
As a consequence, variable $L_{q}$ is restricted to be in the range $[1,K_{T}]$.
On one end, we force each file to be cached by at least $1$ transmitter, hence allowing any request pattern to be satisfied in a finite time.
On the other end, the number of transmitters that can store a file is, naturally, limited by the number of different transmitters.
The above-described cache-redundancies need to satisfy the collective transmitter side cache-constraint
\begin{equation}\label{eqMemoryConstraint1}
	 \sum_{q=1}^{Q} |\mathcal{B}_{q} | \cdot L_{q} \le  N\cdot L,
\end{equation}
where for simplicity we use herein $L \triangleq K_{T}\gamma_{T}$.

In addition to the collective cache-constraint of \eqref{eqMemoryConstraint1}, the transmitter-side placement algorithm need also satisfy the per-transmitter cache-constraint of our model. In Appendix~\ref{sec:Redundancy} we propose an explicit algorithm which, for arbitrary $Q$, $\mathcal{B}$ and $\mathbf{L}$ satisfying~\eqref{eqMemoryConstraint1}, provides a placement which is based solely on the constraint in~\eqref{eqMemoryConstraint1}, while also satisfying the individual cache constraint.

\subsubsection{Receiver-side caching policy}
The receivers cache using a modified version of the algorithm of \cite{maddah2014fundamental}. Specifically, we create a set of $\Lambda < K$ different caches and assign one to each user in a round-robin manner. Variable $\Lambda$ is chosen such that $\Lambda\gamma$ is an integer and the subpacketization constraint is satisfied, i.e. $\binom{\Lambda}{\Lambda\gamma}\le F$. Each file $W^{n},~n\in[N]$, is split into $\Lambda\choose \Lambda\gamma$ equally-sized subfiles
\begin{equation}
	W^{n} \to \{ W^{n}_{\tau}, \ \ \tau\subset[\Lambda],|\tau|=\Lambda\gamma \}
\end{equation}
thus, each subfile has as index some set $\tau$, which is a $\Lambda\gamma$-sized subset of set $[\Lambda]$. Then, the $\ell^{\text{th}}$ cache takes the form
\begin{align}
	\mathcal{Z}_{\ell} = \big\{ W^{n}_{\tau} ~: ~ \ell\in\tau , \forall n\in[N]\big\},~\forall \ell\in\Lambda
\end{align}
which simply means that cache $\ell$ consists of all subfiles $W^{n}_{\tau}$, whose index $\tau$ contains $\ell$. The round-robin manner of assigning caches to users results in an approximate $\frac{K}{\Lambda}$ users to be assigned the same exact content.

\begin{example}
	Let us assume a setting comprized of $K=50$ users, each equipped with a cache of normalized size $\gamma=\frac{1}{10}$, and which users are divided into $\Lambda = 10$ groups.
	For example, such grouping would yield Group $1$ as $\mathcal{G}_{1} = \{1, 11, ..., 41\}$, Group $2$ as $\mathcal{G}_{2} = \{2, 12, ..., 42\}$ and so on.
	
	In the placement phase the files are divided into $\binom{\Lambda}{\Lambda\gamma}= 10$ subfiles as
$		W^{n} \to \{ W^{n}_{1}, W^{n}_{2}, ...,  W^{n}_{10}\},\ \  \forall n\in[N].$
Then, the contents of each cache would be
\begin{align*}
	\mathcal{Z}_{1}& = \{ W^{1}_{1}, ..., W^{n}_{1}\},
	\ \dots,\ 
	\mathcal{Z}_{10} = \{ W^{1}_{10}, ..., W^{n}_{10}\}.
\end{align*}
In the final step, each user of $\mathcal{G}_1$ is assigned cache $\mathcal{Z}_{1}$, each user of Group $2$ is assigned $\mathcal{Z}_{2}$ and so on.

\end{example}

\begin{table}[t]
\begin{tabular}{ll}
\hline
Parameters & Description \\ \hline
$N$    &      Number of different files      \\
$K$    &      Number of users       \\
$\gamma$ & Fraction of library each user can store\\
$\Lambda$ & Number of caches with different content\\
$K_T$      &      Number of single-antenna transmitters      \\
$\gamma_{T}$ & Fraction of library each transmitter can store\\
$n$ & File index\\
$p_n$      &    Probability that file $W^n$ will be requested\\
$\alpha$     & Zipf parameter\\
$Q$ & Number of sub-libraries\\
$\mathcal{B}_q$      &   Content of sub-library $q$\\
$\mathbf{n}$ & Vector storing the boundaries of the sub-libraries\\
$L_{q}$ & Number of transmitters caching file $W^n,$ $\forall n\in \mathcal{B}_{q}$\\
$\mathbf{L}$      &       Vector storing $L_{q}$\\
$K_{q}$ & Number of users requesting a file from $\mathcal{B}_{q}$\\
$\overline{K}_{q}$ & Expected number of users requesting a file from $\mathcal{B}_q$\\
$T(Q,\mathbf{n},\mathbf{L})$         &   Delay of expected requests as a function of $Q, \mathbf{n}, \mathbf{L}$\\
$T^{\star}$ & Min. expected delay optimized over all variables\\
$T^{\star}_{Q}$ & Min. expected delay optimized over $\mathbf{n},\mathbf{L}$. Fixed $Q$\\
$T^{\star}_{Q,\mathbf{n}}$ & Min. expected delay optimized over $\mathbf{L}$. Fixed $Q,\mathbf{n}$\\
$S_Q$ & Problem search space\\
$\pi_{q}$ & Sum probability of sub-library $\mathcal{B}_{q}$\\ \hline
  \noalign{\vskip 1mm}
\end{tabular}
\caption{Notation summary}
\label{tab:Not}
\end{table}

\subsubsection{Content delivery policy\label{sec:Delivery}}
The delivery phase begins with the concurrent request of any single file from each user.
The fulfilment of these requests happens in a per-sub-library manner. Specifically, for each set of $K_{q}$ users, requesting files from sub-library $\mathcal{B}_{q}$, we employ the algorithm of \cite{lampirisSubpacketizationJSAC}.

\subsection{Main optimization problem - Placement at the transmitters}\label{secMainOptProb}

Having described the caching policy at the users and the subsequent delivery policy it remains to design the caches of the transmitters such as to reduce the delivery time. To this end, we need to
\begin{itemize}
	\item select the number of sub-libraries $Q$,
	\item segment the library into $\mathcal{B}_{q} \subset[N], {q\in[Q]}$, and
	\item associate a cache redundancy $L_{q}$ with each $\mathcal{B}_{q}$.
\end{itemize}

Since the request pattern is of a stochastic nature we will focus on minimizing the delivery time of the expected demand.
In other words, we assume that the number of users requesting a file from sub-library $\mathcal{B}_{q}$ is $K_{q} = \overline{K}_{q} =K \pi_{q} $, where we denote the cumulative probability of the files of sub-library $\mathcal{B}_q$, $q\in[Q]$ by $\pi_{q}\triangleq\sum_{k\in\mathcal{B}_q} p_{k}$ .

Taking the above into account, the delivery time for each sub-library takes the form
\begin{equation}\label{eq:qDelay}
	T_{q} = \frac{K_q(1-\gamma)}{ \min\{ L_q(1+\Lambda\gamma), K_{q}\} }, \ \ q\in[Q]
\end{equation}
where the minimum in the denominator describes that the number of users served in a given time-slot is upper bounded by the number of available users.

In addition, in order to magnify the impact of transmitter-side cache redundancy, we impose a further constraint on the value of $L_q$.
{Specifically, we force $L_{q} \le \frac{K_{q}}{\Lambda}$, which ensures that the achieved DoF is always a multiple of $L_{q}$, i.e. takes the form $L_{q}( \Lambda\gamma+1)$ for any value of $L_q$ (cf.~\cite{lampirisSubpacketizationJSAC}). Beyond this value of $L_q$ the best known results achieve only an additive gain i.e., increasing the DoF by $1$ for each increase of $L_q$ by $1$, while negatively affecting the subpacketization~\cite{parrinelloExtendingISIT2020}.}
\begin{remark}
	While treating demands in a per sub-library manner is not necessarily optimal we note that, at the time of this writing, no known delivery algorithm can merge demands from multiple libraries in a single transmission.
	In particular, to date, no known multi-transmitter coded caching algorithm can improve the current performance we achieve, by simultaneously transmitting files that have different transmitter-side redundancy. We believe this to be an interesting open problem.
\end{remark}

Combing the delivery time of each sub-library we get the achievable delay of
\begin{equation}\label{eqDTinit}
	T  =  \sum_{q =1}^{Q}  \frac{K \pi_{q} (1-\gamma)}{\min\{ L_q(1+\Lambda\gamma), K \pi_{q}\} }.
\end{equation}

We can further improve the delivery time of \eqref{eqDTinit} by considering that a set of ultra popular files may be requested by a significant amount of users, hence these files can be naturally multicasted from a single antenna, i.e. to be communicated sequentially and without employing coded caching techniques. This would allow to serve a significant number of users with minimal transmitter-side resources, since storing each file at a single transmitter would suffice to satisfy such demands.
We place these files in sub-library $\mathcal{B}_{1}$, while noting that this additional (natural multicasting) option does not limit the optimization range because $\mathcal{B}_1$ could be---if indicated by the optimization---empty.
Consequently, the cache redundancy assigned to this sub-library is $L_{1}=1$, and the respective delay is $T _1= | \mathcal{B}_{1} |$ and corresponds to broadcasting the whole content of the sub-library.

Combining the above-described delivery delays of each sub-library, and for simplicity refraining from displaying the minimum function, the overall delay achieved takes the form
\begin{equation}\label{eqDeliveryTime1}
	 T(Q, \mathcal{B}, \mathbf{L}) = | \mathcal{B}_{1} | + \sum_{q =2}^{Q}  \frac{K\pi_{q} (1-\gamma)}{  L_{q}(1+\Lambda\gamma)}.
\end{equation}

Because \eqref{eqDeliveryTime1} is linearly dependent on the number of users requesting a file from each sub-library, we can conclude that the expected delay is equal to the delay of the expected demand, i.e. $\overline{K}_{q} = K \pi_{q}$.

Thus, the optimization problem at hand is expressed as
\begin{prb}[General Optimization Problem]\label{eqOriginalProblem}
\equationVar{P1}{
\underset{{Q, \mathcal{B},\mathbf{L}}} {\mathrm{min.}}~ \mathbb{E}\{ &T(Q, \mathcal{B},\mathbf{L})\} \label{originalOptimizationProb}\\
\mathbf{s.t.}\quad & Q \in [N] \label{eqQconstraintOriginal}\\
&|\mathcal{B}_{1}| + \sum_{q= 2}^Q L_q | \mathcal{B}_{q}| \leq L N, \label{EQnConstraintOriginal}\\
& L_{q} \in  \left[ 1,  U_q \right],~~\forall q \in [Q]. \label{eqLambdaConstraintOriginal}
}
\end{prb}
where
$
U_q = \min\left\{K_T, {{K}\pi_q}/{\Lambda}\right\}$.

\section{Description of the Optimization Algorithm}
\label{secSolutionOverview}

Retrieving the optimal solution of Problem~\ref{eqOriginalProblem} requires optimizing variables $Q,\mathcal{B}, \mathbf{L}$.
The main difficulty we face is that the complexity increases exponentially for non-trivial values of $Q$.
This high complexity is attributed to the need to segment set $[N]$ into $Q$ non-overlapping subsets, which have the property of minimizing the problem at hand.
For example, for $Q=2$ the search space for $\mathcal{B}$ has size $2^{N}$, due to the need to consider every possible subset size for sub-library $\mathcal{B}_{1}$, i.e.
\begin{align}
	\binom{N}{1} + 	\binom{N}{2} + 	... + 	\binom{N}{N} = 2^N.
\end{align}

In the general case, the size of the search space is exponential in $N$, as we show in the following proposition.
\begin{proposition}
	The size of the search space for determining sub-libraries $\mathcal{B}$ in Problem~\ref{eqOriginalProblem}, takes the form
	\begin{equation}\label{eqSearchSpaceProp}
		| S_1(Q) | = Q^N, \ \ Q\in [N].
	\end{equation}
\end{proposition}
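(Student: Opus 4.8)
The plan is to count the number of ways to partition the library index set $[N]$ into $Q$ ordered sub-libraries $\mathcal{B}_1, \mathcal{B}_2, \dots, \mathcal{B}_Q$, where we allow empty sub-libraries (this is consistent with the earlier observation that $\mathcal{B}_1$ may be empty if the optimization so dictates, and there is no structural reason forbidding any $\mathcal{B}_q$ from being empty). The key observation is that, since the sub-libraries are labeled (the label $q$ carries distinct meaning because it is tied to a distinct redundancy $L_q$) and must be pairwise disjoint with union $[N]$, specifying such a segmentation is exactly the same as specifying, for each file index $n \in [N]$ independently, which of the $Q$ sub-libraries it belongs to.

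First I would formalize this as a bijection: to each segmentation $(\mathcal{B}_1, \dots, \mathcal{B}_Q)$ associate the function $f : [N] \to [Q]$ defined by $f(n) = q$ whenever $n \in \mathcal{B}_q$. This is well-defined precisely because the $\mathcal{B}_q$ are non-overlapping and cover $[N]$ (every index lands in exactly one sub-library). Conversely, any function $f : [N] \to [Q]$ recovers a segmentation via $\mathcal{B}_q = f^{-1}(\{q\})$, and the two maps are mutually inverse. Hence the number of segmentations equals the number of functions from an $N$-element set to a $Q$-element set, which is $Q^N$. This gives $|S_1(Q)| = Q^N$ for each $Q \in [N]$, as claimed. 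As a sanity check, the $Q = 2$ case reproduces the $2^N$ count stated just before the proposition (there the authors counted by the size of $\mathcal{B}_1$, and $\sum_{k=0}^{N} \binom{N}{k} = 2^N$; including the empty choice $\binom{N}{0}$ matches the convention that $\mathcal{B}_1$ may be empty).

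I do not expect any genuine obstacle here — the statement is essentially a restatement of the elementary fact that there are $Q^N$ functions $[N] \to [Q]$. The only point requiring a moment of care is the bookkeeping convention for empty sub-libraries: one should state explicitly that empty $\mathcal{B}_q$ are permitted, so that the map $f \mapsto (f^{-1}(\{1\}), \dots, f^{-1}(\{Q\}))$ is genuinely a bijection and not merely a surjection onto a smaller set. (If one instead insisted all $\mathcal{B}_q$ be non-empty, the count would be the surjection number $\sum_{j=0}^{Q} (-1)^j \binom{Q}{j}(Q-j)^N$, i.e. $Q!\,S(N,Q)$ with $S$ a Stirling number of the second kind, which is not $Q^N$; so the convention matters and should be stated.) With that convention fixed, the proof is a one-line counting argument.
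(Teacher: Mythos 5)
Your proof is correct, but it takes a genuinely different route from the paper. You count segmentations directly via the bijection between ordered partitions $(\mathcal{B}_1,\dots,\mathcal{B}_Q)$ with possibly empty parts and functions $f:[N]\to[Q]$, giving $Q^N$ in one line. The paper instead decomposes the count by the sizes $k_1,\dots,k_{Q-1}$ of the first $Q-1$ sub-libraries, writes the search space as a nested sum of products of binomial coefficients, and collapses it from the inside out by repeated application of the binomial theorem ($\sum_k \binom{m}{k}=2^m$, then $\sum_k \binom{m}{k}2^k=3^m$, and so on) until $Q^N$ emerges. Your argument is shorter and makes the labeled-parts/empty-parts convention explicit, which is exactly the point the paper's computation is loose about: its sums start at $k_q=1$ (formally excluding empty sub-libraries), yet the evaluations used, e.g. $\binom{N}{1}+\cdots+\binom{N}{N}=2^N$, implicitly include the $k=0$ terms, and the final answer $Q^N$ is precisely the count with empty parts allowed — consistent with the paper's earlier remark that $\mathcal{B}_1$ may be empty. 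So your bijection both recovers the stated result and cleans up that boundary bookkeeping; the paper's size-decomposition, on the other hand, mirrors how one would enumerate the search space algorithmically, sub-library by sub-library, which is the structure its search discussion is built around.
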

\begin{proof}
	We begin with reminding the binomial equation which holds for any $x,y\in\mathbb{C}$,
	\begin{equation}\label{eqBinomial}
		(x+y)^{n} = \sum_{k=0}^{n} \binom{n}{k} x^{n-k}y^{k}.
	\end{equation}
For any $Q\in[N]$ we denote the size of the first sub-library with $k_1\in[N]$, and the size of subsequent sub-libraries with $k_q\in[\mathcal{N}_{q}]$, where $\mathcal{N}_{q} = N-\sum_{i=1}^{q-1} k_i$ signifies the maximum number of elements in sub-library $\mathcal{B}_{q}$.
Hence, for some $\mathcal{N}_{Q-1}$, the number of possible sub-libraries $\mathcal{B}_{Q-1}$ are
\begin{equation}
	\sum_{k_{Q-1}=1}^{\mathcal{N}_{Q-1}}\binom{\mathcal{N}_{Q-1}}{k_{Q-1}}.
\end{equation}

Using the above, we can continue to calculate all possible pairs $\mathcal{B}_{Q-2}, \mathcal{B}_{Q-1}$, under the assumption that we have allocated some $\mathcal{N}_{Q-2}$ files to the first $Q-3$ sub-libraries.
The number of all possible sub-library pairs $\mathcal{B}_{Q-2}, \mathcal{B}_{Q-1}$ is
\begin{equation}
	\sum_{k_{Q-2}=1}^{\mathcal{N}_{Q-2}} \left( \binom{\mathcal{N}_{Q-2}}{k_{Q-2}} \cdot \sum_{k_{Q-1}=1}^{\mathcal{N}_{Q-1}}\binom{\mathcal{N}_{Q-1}}{k_{Q-1}}\right).
\end{equation}

Extending this to the general case yields
\begin{align}\nonumber
	|S_Q|& = \sum_{k_1=1}^{N}\Bigg\{ \binom{N}{k_{1}}\sum_{k_2=1}^{\mathcal{N}_{1}} \bigg[ \binom{\mathcal{N}_{1}}{k_{2}} \cdots\\	&\cdots\sum_{k_{Q-2}=1}^{\mathcal{N}_{Q-2}} \left( \binom{\mathcal{N}_{Q-2}}{k_{Q-2}}
	\sum_{k_{Q-1}=1}^{\mathcal{N}_{Q-1}} \binom{\mathcal{N}_{Q-1}}{k_{Q-1}}\right) \cdots \bigg] \Bigg\}.\label{eqSumBinomials}
\end{align}
	Using \eqref{eqBinomial} for $x=y =1$, the last summand of~\eqref{eqSumBinomials} becomes $2^{\mathcal{N}_{Q-2}}$. Similarly, the inner most parenthesis (two last summands) of \eqref{eqSumBinomials} can be calculated using the newly acquired value of the last summand and \eqref{eqSumBinomials} as follows
	\begin{equation*}
		\sum_{k_{Q-2}=1}^{\mathcal{N}_{Q-2}} \binom{\mathcal{N}_{Q-2}}{k_{Q-2}}2^{\mathcal{N}_{Q-2}} = 3^{\mathcal{N}_{Q-3}}.
	\end{equation*}
Continuing in the same manner yields \eqref{eqSearchSpaceProp}.
\end{proof}

We employ the following steps to solve Problem~\ref{eqOriginalProblem}.
\begin{enumerate}
	\item In Lemma~\ref{lemmaConsecutive} (Section~\ref{secReducedSpace}) we prove that the optimal solution of Problem~\ref{eqOriginalProblem} should be of the form  $\mathcal{B}_{q} = \{ n_{q-1}\!+\!1, ..., n_{q}\},$ $\forall q\in[Q]$, where $n_{0}=0$ and $n_{Q}=N$. In other words, the first sub-library should be comprised of the $n_{1}$ most popular files, the second sub-library would contain files $\{n_{1}+1, ..., n_{2}\}$, and so on.
	From this point on we refer to a library segmentation using vector
	\begin{equation}\label{eqNvec}
		\mathbf{n} \triangleq \{n_{1}, ..., n_{Q}\!=\! N\}.
	\end{equation}
	
	We can easily deduce the size of the reduced search space
	\begin{equation}\label{eqFinalSearchSpace}
		|S_{2}(Q) | = \binom{N}{Q} \approx \left( \frac{N}{Q}\right)^{Q}
	\end{equation}
which considerably prunes the search space from exponential in $N$ to polynomial in $N$, without sacrificing optimality.
\item We provide an algorithm that searches $S_{2}(Q)$ requiring complexity at most
\begin{equation}
	\left( \log_{2}N\right)^{Q}.
\end{equation}
	\item We reformulate the objective function as a set of nested problems, as follows
	\begin{equation}
		 \underset{Q(\mathbf{n}^{\star}, \mathbf{L}^{\star})}{\mathrm{min.}}~~\underset{\mathbf{n}(\mathbf{L}^{\star})}{\mathrm{min.}}~~ \underset{\mathbf{L}}{\mathrm{min.}} ~~\mathbb{E}\{ T(Q, \mathbf{n},\mathbf{L})\}
	\end{equation}
	which effectively means that for each search of the outmost variables $Q$ and $ \mathbf{n}$ we optimize the innermost variables $\mathbf{n}, \mathbf{L}$ and $\mathbf{L}$, respectively
	hence, maintaining the optimality of the solution \cite{BoydConvex2004}.
\item As we show in Section~\ref{secLagrangeMultipliers}, calculating the optimal $\mathbf{L}$ can be achieved via the use of the Karush-Kuhn-Tucker (KKT) conditions.
In other words, the innermost problem has an analytical solution, conditional on the values of $Q$ and $\mathbf{n}$, which can be used directly for the outer optimization of these variables.
Furthermore, the continuous relaxation of $\mathbf{L}$ required by the application of the KKT conditions would result in a small performance degradation, which we show in Lemma~\ref{lemmaRelaxationL} is at most $12\%$.

\item Finally, we prove that the objective function, when optimized over both $\mathbf{L}$ and $\mathbf{n}$ is monotonically decreasing when $Q\in[1,Q^{\star}]$ and monotonincally increasing for $Q\in[Q^{\star},N]$. Thus, the function has a single minimum point which we calculate using a bisection algorithm.
\end{enumerate}

\subsection{Reduced sub-library search space\label{secReducedSpace}}

In order to show the optimality of the solution when considering the reduced sub-space in \eqref{eqNvec}-\eqref{eqFinalSearchSpace}, we begin with a corollary that describes the relationship between the cache-allocation among any two arbitrary sub-libraries.
\begin{corollary}\label{corMoreNeedsMore}
	For two arbitrary sub-libraries $\mathcal{B}_{q}, \mathcal{B}_{r} \subset[N]$ for which $\pi_{q}>\pi_{r}$, their respective optimal cache-redundancy allocations satisfy
	\begin{equation}
		L_{q}^{\star} > L_{r}^{\star}.
	\end{equation}
\end{corollary}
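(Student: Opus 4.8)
The plan is to fix the segmentation --- i.e.\ to treat $Q$ and the sets $\mathcal{B}_{q}$ as given --- and to characterise the optimal redundancy vector $\mathbf{L}^{\star}$ for that segmentation, since the corollary concerns exactly this $\mathbf{L}$-optimal allocation. Adopting the continuous relaxation used elsewhere in the paper, each $L_{q}$ is taken in $[1,U_{q}]$. Restricted to $\mathbf{L}$, the objective in \eqref{eqDeliveryTime1} is $\sum_{q\ge 2} c_{q}/L_{q}$ with $c_{q}\triangleq K\pi_{q}(1-\gamma)/(1+\Lambda\gamma)>0$, a sum of strictly convex, strictly decreasing functions; the memory budget $\sum_{q}|\mathcal{B}_{q}|L_{q}\le LN-|\mathcal{B}_{1}|$ and the box constraints are linear, so the problem is convex in $\mathbf{L}$ and the KKT conditions are necessary and sufficient.

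I would first note that the memory constraint is active at any optimum (otherwise one could raise some $L_{q}<U_{q}$ and strictly lower the delay), and then write the Lagrangian with multiplier $\mu>0$ for the budget and $\nu_{q},\xi_{q}\ge 0$ for $L_{q}\ge 1$ and $L_{q}\le U_{q}$. Stationarity in $L_{q}$ reads $-c_{q}/L_{q}^{2}+\mu|\mathcal{B}_{q}|-\nu_{q}+\xi_{q}=0$, so for any sub-library whose optimal redundancy lies strictly inside its interval we have $\nu_{q}=\xi_{q}=0$ and $L_{q}^{\star}=\sqrt{c_{q}/(\mu|\mathcal{B}_{q}|)}$. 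Equivalently, at the optimum the marginal delay saved per unit of transmitter memory spent on a sub-library, $c_{q}/(|\mathcal{B}_{q}|(L_{q}^{\star})^{2})\propto \pi_{q}/(|\mathcal{B}_{q}|(L_{q}^{\star})^{2})$, is equalised across all such sub-libraries; this ``bang-per-buck'' identity is what drives the claim.

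The comparison then follows: for two sub-libraries with $\pi_{q}>\pi_{r}$, in the interior case $L_{q}^{\star}>L_{r}^{\star}$ is read directly off the closed form (the square root being strictly increasing). Alternatively, one argues by contradiction: assuming $L_{q}^{\star}\le L_{r}^{\star}$, a budget-preserving transfer that raises $L_{q}$ and lowers $L_{r}$ in inverse proportion to their cardinalities changes the delay, to first order, by a quantity whose sign is at most that of $c_{r}/|\mathcal{B}_{r}|-c_{q}/|\mathcal{B}_{q}|$, hence negative whenever $\pi_{q}/|\mathcal{B}_{q}|>\pi_{r}/|\mathcal{B}_{r}|$, which contradicts optimality. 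I expect the main obstacle to be the boundary behaviour created by the constraints $L_{q}\in[1,U_{q}]$: the cases where $L_{r}^{\star}$ is pinned at $1$ or $L_{q}^{\star}$ at $U_{q}$ fall outside the closed form and must be dispatched by hand (in particular, recovering the \emph{strict} inequality when $L_{r}^{\star}=1$), and one must also verify that the cardinalities $|\mathcal{B}_{q}|,|\mathcal{B}_{r}|$ enter consistently with $\pi_{q}>\pi_{r}$ forcing the ordering --- a point that is immediate for sub-libraries of comparable size and, more generally, needs the structure of the optimal segmentation.
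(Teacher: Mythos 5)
Your KKT route is more careful about the actual memory constraint than the paper is, but as written it does not prove the stated corollary, and the sticking point is exactly the one you flag at the end. With the true weighted budget $\sum_q |\mathcal{B}_q| L_q \le LN-|\mathcal{B}_1|$, your stationarity condition gives $L_q^{\star}=\sqrt{c_q/(\mu|\mathcal{B}_q|)}$, so the interior ordering is governed by the per-file density $\pi_q/|\mathcal{B}_q|$, not by $\pi_q$; the claim that $L_q^{\star}>L_r^{\star}$ can be ``read directly off the closed form'' from $\pi_q>\pi_r$ is therefore false in your own framework. Concretely, take $\pi_q=0.3$ with $|\mathcal{B}_q|=100$ and $\pi_r=0.25$ with $|\mathcal{B}_r|=10$: then $\pi_q>\pi_r$ but $\pi_q/|\mathcal{B}_q|<\pi_r/|\mathcal{B}_r|$, and your closed form yields $L_q^{\star}<L_r^{\star}$ at an interior optimum. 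Together with the untreated boundary cases ($L_r^{\star}$ pinned at $1$, $L_q^{\star}$ at $U_q$), which you also defer, the argument is not closed; the missing piece is precisely a reason why cardinalities cannot overturn the ordering, and your proposal does not supply one.

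It is worth seeing how the paper's own proof (Appendix~\ref{proofMoreNeedsMore}) gets around this: it is not a Lagrangian argument at all, but a two-line exchange argument on the normalized partial delay $T_{\mathrm{p}}(L_q,L_r)=\pi_q/L_q+\pi_r/L_r$, showing that with $L_q+L_r$ held fixed, shifting $\ell$ units of redundancy toward the more popular sub-library strictly reduces $T_{\mathrm{p}}$. This implicitly treats a unit of redundancy as costing the same in every sub-library --- the perturbation $(\tilde L+\ell,\tilde L-\ell)$ is budget-neutral under \eqref{eqMemoryConstraint1} only when $|\mathcal{B}_q|=|\mathcal{B}_r|$ --- so the cardinality subtlety you uncovered is silently sidestepped rather than resolved there. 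In short, your approach is a genuinely different (and more constraint-faithful) route that surfaces a real imprecision in the statement, but as a proof of Corollary~\ref{corMoreNeedsMore} it has a genuine gap; if the goal is simply to reproduce the corollary, the paper's swap/perturbation argument (restricting to trades that keep $L_q+L_r$ fixed, and checking feasibility of the trade against the box constraints) is the short way to close it.
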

\begin{proof}
	The proof is relegated to Appendix~\ref{proofMoreNeedsMore}.
\end{proof}

With this in place, we proceed with the lemma that establishes the optimality of the consecutively indexed library segmentation.
\begin{lemma}\label{lemmaConsecutive}
	 For arbitrary number of sub-libraries $Q$, the sub-libraries producing the optimal delay are those whose files have consecutive indices.
\end{lemma}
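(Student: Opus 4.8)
The plan is to prove the lemma by an exchange argument: starting from any optimal segmentation, I show that swapping a less popular file (higher index) assigned to a "heavy" sub-library with a more popular file (lower index) assigned to a "lighter" sub-library cannot increase the objective, and can be repeated until the sub-libraries are intervals of consecutive indices. The main tool is Corollary~\ref{corMoreNeedsMore}: if $\pi_q > \pi_r$ then the optimal redundancies satisfy $L_q^\star > L_r^\star$. So the "heavy" sub-library (larger cumulative probability) gets the larger redundancy, and it is intuitively wasteful to spend that large redundancy on a low-popularity file while a high-popularity file sits in a sub-library with small redundancy.

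Concretely, suppose for contradiction that in an optimal solution there exist two sub-libraries $\mathcal{B}_q$ and $\mathcal{B}_r$ and two files with indices $i < j$ (so $p_i \ge p_j$) with $j \in \mathcal{B}_q$ and $i \in \mathcal{B}_r$, while the sub-libraries are ordered so that $\pi_q \ge \pi_r$; I would first argue we may assume $\pi_q > \pi_r$ strictly (if equal, the sub-libraries are interchangeable in the objective up to relabelling and one can reorganize indices freely), hence $L_q^\star \ge L_r^\star$ by the corollary. Now consider the new segmentation $\mathcal{B}_q' = (\mathcal{B}_q \setminus \{j\}) \cup \{i\}$ and $\mathcal{B}_r' = (\mathcal{B}_r \setminus \{i\}) \cup \{j\}$, keeping all other sub-libraries and keeping the same redundancy vector $\mathbf{L}$. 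The key observation is that the objective~\eqref{eqDeliveryTime1} depends on the segmentation only through the cumulative probabilities $\pi_q = \sum_{k\in\mathcal{B}_q} p_k$, and the memory constraint~\eqref{EQnConstraintOriginal} depends only on the cardinalities $|\mathcal{B}_q|$ (weighted by $L_q$); the swap leaves every cardinality unchanged, so feasibility with respect to~\eqref{EQnConstraintOriginal} and~\eqref{eqLambdaConstraintOriginal} is preserved (the bounds $U_q$ shift, but since $L_q^\star \le U_q$ and the direction of the probability change is favorable, I would check this stays valid — see below). After the swap, $\pi_q$ decreases by $p_i - p_j \ge 0$ and $\pi_r$ increases by the same amount. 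Writing the relevant part of the objective as $\frac{K(1-\gamma)}{1+\Lambda\gamma}\big(\frac{\pi_q}{L_q} + \frac{\pi_r}{L_r}\big)$ (with the obvious modification if $q=1$ or $r=1$), the change equals $\frac{K(1-\gamma)}{1+\Lambda\gamma}(p_i - p_j)\big(\frac{1}{L_r} - \frac{1}{L_q}\big) \le 0$ since $L_q^\star \ge L_r^\star$. So the swap does not hurt, and repeating it (a standard sorting argument — each swap reduces the number of "inversions" between file-index order and sub-library-probability order) terminates in a configuration where every file in a higher-probability sub-library has a smaller index than every file in a lower-probability one; relabelling sub-libraries in decreasing order of $\pi_q$ then gives exactly the consecutive-index structure $\mathcal{B}_q = \{n_{q-1}+1,\dots,n_q\}$.

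The main obstacle I anticipate is handling the corner cases cleanly rather than the core inequality. First, ties: when $\pi_q = \pi_r$, the corollary only gives $L_q^\star = L_r^\star$ (or is silent), so the swap is objective-neutral and I must argue it still makes progress toward the consecutive form — this is fine because a neutral swap that reduces inversions can be taken freely, but it needs to be stated carefully to ensure termination. Second, the special sub-library $\mathcal{B}_1$ with forced $L_1 = 1$ and delay $|\mathcal{B}_1|$: here the objective contribution is $|\mathcal{B}_1|$, independent of which files are in it, so swapping a file into or out of $\mathcal{B}_1$ changes only the other sub-library's term, and since $L_1 = 1$ is the smallest possible redundancy while $\pi_1$ is (by the ultra-popular rationale) the largest, the same sign analysis applies — but one should verify the inequality $\frac1{L_r} - \frac1{L_1} \ge 0$ holds, which it does trivially. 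Third, the constraint-set subtlety: the upper bound $U_q = \min\{K_T, K\pi_q/\Lambda\}$ moves when $\pi_q$ changes; decreasing $\pi_q$ tightens $U_q$, so I must confirm the unchanged $L_q^\star$ still satisfies $L_q^\star \le U_q'$ — I would argue this either by noting $L_q^\star$ was already $\le U_r \le U_q'$-type bounds given $\pi_q' \ge \pi_r$, or, more robustly, by re-optimizing $\mathbf{L}$ after the swap (which can only further decrease the objective), so that optimality of the swapped-and-reoptimized solution versus the original gives the needed contradiction without ever leaving the feasible set. This re-optimization trick is the cleanest way to sidestep all the constraint-bookkeeping at once, and I expect to phrase the final argument that way.
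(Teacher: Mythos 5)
Your strategy is exactly the paper's: an exchange argument that invokes Corollary~\ref{corMoreNeedsMore}, keeps the redundancy vector fixed, swaps the more popular file into the sub-library with larger cumulative popularity (hence larger redundancy), and iterates a sorting argument until every sub-library is an interval of consecutive indices. The one real problem is that the central computation is written with reversed bookkeeping, and the displayed inequality is false as stated. With your own definition of the swap, $\mathcal{B}_q' = (\mathcal{B}_q\setminus\{j\})\cup\{i\}$ and $\mathcal{B}_r' = (\mathcal{B}_r\setminus\{i\})\cup\{j\}$ with $p_i \ge p_j$, it is $\pi_q$ that \emph{increases} by $p_i - p_j$ and $\pi_r$ that decreases, so the change in \eqref{eqDeliveryTime1} is $\frac{K(1-\gamma)}{1+\Lambda\gamma}(p_i-p_j)\bigl(\frac{1}{L_q}-\frac{1}{L_r}\bigr)\le 0$. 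The quantity you display instead, $(p_i-p_j)\bigl(\frac{1}{L_r}-\frac{1}{L_q}\bigr)$, is $\ge 0$ under your own hypotheses $p_i\ge p_j$ and $L_q^{\star}\ge L_r^{\star}$, so the claimed ``$\le 0$'' does not follow from what is written. The same slip appears in your $\mathcal{B}_1$ aside: since $L_1=1$ is the minimal redundancy, $\frac{1}{L_r}-\frac{1}{L_1}\le 0$ rather than $\ge 0$; the correct (and simpler) observation there is that a swap leaves the term $|\mathcal{B}_1|$ unchanged while decreasing the other library's contribution $\pi_r/L_r$ when the more popular file moves into $\mathcal{B}_1$. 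Both are one-line fixes, and once corrected your inequality coincides with the paper's computation $T_1-T_2=(p_{r_a}-p_{r_b})\frac{L_b-L_a}{L_aL_b}>0$ in Appendix~\ref{proofConsecutive}.

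On the side issues you raise: the paper's proof does precisely what you call the naive option---fix $\mathbf{L}$, swap, compare---and does not track the shift of the bounds $U_q$, the ties $\pi_q=\pi_r$, or the $\mathcal{B}_1$ term separately, so your extra care goes beyond the reference argument. Your re-optimization fallback is a sensible way to handle the moving bound, but note it is not entirely free: to compare the optimum of the swapped segmentation with the original optimum you still need to exhibit a feasible $\mathbf{L}$ for the new segmentation whose delay is no larger, so one sentence of justification (e.g., only $U_r$ can tighten, and only when $L_r$ sat at $K\pi_r/\Lambda$) is needed rather than none. These refinements aside, your proof is the same exchange-and-sort argument as the paper's, and is correct once the sign bookkeeping is repaired.
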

\begin{proof}
The proof is relegated to Appendix~\ref{proofConsecutive}.
\end{proof}

Consequently, using Lemma~\ref{lemmaConsecutive} we can simplify the objective function as
\begin{equation}\label{eqDeliveryTime2}
	 \mathbb{E} \big\{ T(Q, \mathbf{n},\mathbf{L}) \big\}  = {n}_{1}+ \sum_{q =2}^{Q}  \frac{\overline{K}_{q} (1-\gamma)}{L_{q}(1+\Lambda\gamma)}
\end{equation}
and the optimization problem takes the following form.
\begin{prb}[Main Optimization Problem]\label{prb:mainprb}
\equationVar{P2}{
&\underset{Q,\mathbf{n},\mathbf{L}}{\mathrm{min.}}~ \mathbb{E} \big\{ T(Q, \mathbf{n},\mathbf{L}) \big\}\label{mainoptfunc}\\
\mathbf{s.t.}\quad & Q \in [N] \label{eqQconstraint}\\
&n_{1} + \sum_{q= 2}^Q L_q (n_q - n_{q-1}) \leq L N, \label{EQnConstraint}\\
& L_{q} \in  \left[ 1,  U_q \right],~~\forall q \in [Q]. \label{eqLambdaConstraint}
}
\end{prb}

The constraints of Problem~\ref{prb:mainprb} are those of Problem~\ref{eqOriginalProblem}, with the notable difference being constraint \eqref{EQnConstraint} which substitutes \eqref{EQnConstraintOriginal}, to yield a substantially reduced search space without loss of optimality.

As we discuss in Section~\ref{secSpeedingUp}, the library segmentation of Problem~\ref{prb:mainprb} has an added benefit, on top of reducing the search space, compared to the general library segmentation of Problem~\ref{eqOriginalProblem}. We show that \textit{any} library segmentation as the one proposed in Problem~\ref{prb:mainprb}, and under the optimal allocation of cache-redundancies $L_{q}$, would outperform the uniform popularity setting.
On the other hand, the general library segmentation of Problem~\ref{eqOriginalProblem} does not share this property (see discussion in Corollary~\ref{corImprovement} and Remark~\ref{remarkNonOptimalLibraries}).

\subsection{Optimizing cache redundancies $L_q$}\label{secLagrangeMultipliers}

We begin this section with the following lemma.
\begin{lemma}
The objective function is convex in variables $\mathbf{L}$ for fixed $Q$ and $\mathbf{n}$.
\end{lemma}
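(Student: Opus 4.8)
The plan is to observe that, with $Q$ and $\mathbf{n}$ held fixed, the objective in~\eqref{eqDeliveryTime2} is, up to an additive constant, a separable sum of positive multiples of the scalar reciprocal function, and then to invoke the standard facts that $x\mapsto 1/x$ is convex on $(0,\infty)$, that a sum of convex functions is convex, and that a sum of convex functions of disjoint blocks of variables is jointly convex.

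First I would note that for fixed $Q$ and $\mathbf{n}$ the term $n_{1}$ is a constant and, for each $q\in\{2,\dots,Q\}$, the coefficient $c_{q}\triangleq \frac{\overline{K}_{q}(1-\gamma)}{1+\Lambda\gamma}$ is a strictly positive constant, since $\overline{K}_{q}=K\pi_{q}>0$, $\gamma\in[0,1)$, and $1+\Lambda\gamma>0$. Hence
\[
\mathbb{E}\{T(Q,\mathbf{n},\mathbf{L})\}=n_{1}+\sum_{q=2}^{Q}\frac{c_{q}}{L_{q}}.
\]
I would also recall that the feasible set described by the box constraints~\eqref{eqLambdaConstraint} together with the linear budget~\eqref{EQnConstraint} is the intersection of a hyperrectangle and a halfspace, hence convex; in particular every feasible point satisfies $L_{q}\ge 1>0$, and $L_{1}=1$ is fixed, so it does not enter as an optimization variable.

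The core step is to check that each summand $h_{q}(L_{q})\triangleq c_{q}/L_{q}$ is convex on $(0,\infty)$: indeed $h_{q}''(L_{q})=2c_{q}/L_{q}^{3}>0$ there. Since the objective is the sum of the constant $n_{1}$ and the functions $h_{q}$, each depending on a single, distinct coordinate $L_{q}$, its Hessian with respect to $\mathbf{L}$ is $\mathrm{diag}\!\left(2c_{2}/L_{2}^{3},\dots,2c_{Q}/L_{Q}^{3}\right)$, which is positive definite on the feasible set. Therefore $\mathbb{E}\{T(Q,\mathbf{n},\mathbf{L})\}$ is convex in $\mathbf{L}$, and together with the convexity of the feasible region this makes the innermost problem (minimization over $\mathbf{L}$ for fixed $Q,\mathbf{n}$) a convex program, as required for the KKT-based solution in Section~\ref{secLagrangeMultipliers}.

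There is essentially no hard part here; the only points deserving a moment of care are making explicit that $L_{1}=1$ is fixed so that the sum genuinely starts at $q=2$, and confirming that the relevant domain excludes $L_{q}=0$ so that $1/L_{q}$ and its derivatives are well-defined and the claimed convexity is not merely formal — both immediate from $L_{q}\in[1,U_{q}]$.
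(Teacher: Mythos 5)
Your proof is correct and follows essentially the same route as the paper: exploit the separable form $n_{1}+\sum_{q\ge 2} c_{q}/L_{q}$, observe that the Hessian in $\mathbf{L}$ is diagonal with entries $2c_{q}/L_{q}^{3}>0$ and zero off-diagonal terms, and conclude convexity (the paper's Appendix~\ref{sec:BiconvexProof} does exactly this computation, stating positive semi-definiteness). Your added remarks on the convexity of the feasible set and the fixed $L_{1}=1$ are harmless extras, not a departure from the paper's argument.
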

\begin{proof}
The proof is relegated to Appendix~\ref{sec:BiconvexProof}.
\end{proof}

Hence, applying the KKT condition would provide the optimal vector $\mathbf{L}$.
The Lagrangian takes the form
\begin{align}
\mathcal{L} = &n_1 + \sum_{q=2}^Q \frac{{K}\pi_{q}(1-\gamma)}{L_q(1+\Lambda\gamma)}\nonumber\\
&+ \lambda \left(n_{1}\! +\! \sum_{q=2}^{Q} L_{q}(n_{q}-n_{q-1}) - L N\right)\nonumber\\
&+ \sum_{q=2}^Q \mu_q (-L_q +1) + \sum_{q=2}^Q \nu_q (L_q - U_q).\label{eqLagrangian}
\end{align}
where ${L}_q$, $\mu_q$, $\nu_q \ge 0$, $\forall q\in[Q]$ and $\lambda \in \mathbb{R}$.

\begin{lemma}\label{theoremLagrange1}
	The optimal cache-allocation vector $\mathbf{L}$ for fixed $Q,\mathbf{n}$ is given by
	\begin{align}\label{eqGeneralL}
		L_{q} =
		\begin{dcases}
			1, & q\in\phi\!\cup\! \{1\}\\
			U_{q}, & q\in \psi\\
			\sqrt{ 	\frac{\pi_{q}}{n_{q}\!-\!n_{q-1}}}  \frac{{LN - n_{1} - \Phi_S - \Psi_S }}{\sum_{r\in\chi \sqrt{\pi_{r}( n_{q}-n_{q-1})}}}, & q\in \chi
		\end{dcases}
	\end{align}
where $\Phi_S = \sum_{q\in\phi}(n_{q}-n_{q-1})$, $\Psi_S = \sum_{q\in\psi}U_{q} \cdot(n_{q}-n_{q-1})$, and $\phi\cup \chi\cup \psi \cup \{1\} = [Q]$.
\end{lemma}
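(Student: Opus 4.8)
The plan is to apply the Karush-Kuhn-Tucker conditions directly to the Lagrangian in \eqref{eqLagrangian}, exploiting the convexity established in the preceding lemma so that stationarity together with primal/dual feasibility and complementary slackness are necessary and sufficient for optimality. First I would compute the stationarity condition $\partial \mathcal{L}/\partial L_q = 0$ for each $q\in\{2,\dots,Q\}$, which gives
\begin{equation*}
	-\frac{K\pi_q(1-\gamma)}{L_q^2(1+\Lambda\gamma)} + \lambda(n_q - n_{q-1}) - \mu_q + \nu_q = 0.
\end{equation*}
The sub-library $\mathcal{B}_1$ is handled separately: its redundancy is fixed at $L_1 = 1$ by the modelling assumption (natural multicasting), so it never enters the optimization and always belongs to the ``$L_q=1$'' branch, which is why $\{1\}$ is appended to $\phi$ in \eqref{eqGeneralL}.

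Next I would partition the remaining indices $\{2,\dots,Q\}$ according to which constraints are active, i.e. into $\phi$ (lower bound $L_q = 1$ active, so $\mu_q \ge 0$, $\nu_q = 0$), $\psi$ (upper bound $L_q = U_q$ active, so $\nu_q \ge 0$, $\mu_q = 0$), and $\chi$ (interior, so $\mu_q = \nu_q = 0$). For $q\in\chi$, stationarity reduces to $L_q^2 \lambda (n_q - n_{q-1}) = \frac{K\pi_q(1-\gamma)}{1+\Lambda\gamma}$, hence $L_q = \sqrt{\tfrac{K\pi_q(1-\gamma)}{\lambda(1+\Lambda\gamma)(n_q-n_{q-1})}} \propto \sqrt{\pi_q/(n_q - n_{q-1})}$, which is the functional form appearing in the third branch. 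The single remaining unknown is the multiplier $\lambda$; I would pin it down by imposing that the collective cache constraint \eqref{EQnConstraint} holds with equality (which it must at optimum, since the objective is strictly decreasing in each $L_q$ so one never wastes transmitter cache). Substituting $n_1 + \sum_{q\in\phi}(n_q-n_{q-1}) + \sum_{q\in\psi}U_q(n_q-n_{q-1}) + \sum_{q\in\chi} L_q(n_q-n_{q-1}) = LN$ and solving for $\lambda$ yields $\sqrt{1/\lambda}\cdot(1+\Lambda\gamma)^{-1/2}$ times the appropriate constant equal to $\bigl(LN - n_1 - \Phi_S - \Psi_S\bigr)/\sum_{r\in\chi}\sqrt{K\pi_r(1-\gamma)(n_r - n_{r-1})}/(1+\Lambda\gamma)$; plugging this back gives exactly the closed form in \eqref{eqGeneralL} (with the common factors $K(1-\gamma)/(1+\Lambda\gamma)$ cancelling between numerator and denominator, which is why they do not appear).

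Finally I would verify consistency of the resulting partition: for $q\in\phi$ one needs the candidate interior value to be $\le 1$ (equivalently $\mu_q\ge 0$), and for $q\in\psi$ one needs it to be $\ge U_q$ (equivalently $\nu_q\ge 0$); these are precisely the conditions that determine membership in $\phi$, $\psi$, $\chi$, and one argues such a consistent partition exists and is unique because the interior expression is monotone in $\lambda$ and the feasible region is a nonempty compact convex set. Since the KKT system has a feasible solution and the problem is convex with affine constraints (Slater's condition trivially holds), this solution is the global optimum. The main obstacle is not any single computation but the bookkeeping in showing that the clipping-to-bounds procedure is self-consistent — i.e. that the three index sets are well defined simultaneously with the value of $\lambda$ they induce; in practice this is where one invokes Corollary~\ref{corMoreNeedsMore} (more popular sub-libraries get more redundancy) to order the thresholds and argue the partition is obtained by a water-filling-type sweep, so that the statement as written is the honest description of the unique KKT point.
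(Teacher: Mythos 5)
Your proposal is correct and follows essentially the same route as the paper: apply the KKT conditions to the Lagrangian \eqref{eqLagrangian}, split the indices into $\phi$, $\psi$, $\chi$ according to which bound is active, read off $L_q\propto\sqrt{\pi_q/(n_q-n_{q-1})}$ from stationarity on $\chi$, and eliminate $\lambda$ by enforcing the cache constraint with equality, exactly as in Appendix B.4. Your added remarks on the self-consistency of the partition and Slater's condition go slightly beyond what the paper spells out, but they do not change the argument.
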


\begin{proof}
	The proof is relegated to Appendix~\ref{proofLagrange}.
\end{proof}

\begin{theorem}\label{theoDelay}	
	The expected delay optimized over $\mathbf{L}$ takes the form
	\begin{align}\nonumber
		T^{\star}(Q,\mathbf{n}) =& n_{1}  + \frac{K(1-\gamma)}{1+\Lambda\gamma}\sum_{q\in\phi}\pi_{q} + |\psi|\frac{\Lambda(1-\gamma)}{1+\Lambda\gamma} \\
		&+\frac{K(1-\gamma)}{1+\Lambda\gamma}\frac{ \left( \sum_{q\in\chi}\sqrt{\pi_{q}(n_{q}-n_{q-1}} \right)^{2}}{ LN-n_{1} - \Phi_{S} - \Psi_{S}}.
		\label{eqDelayOptL}
	\end{align}

\end{theorem}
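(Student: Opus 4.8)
The plan is to substitute the optimal cache-allocation vector $\mathbf{L}$ from Lemma~\ref{theoremLagrange1} directly into the simplified objective function \eqref{eqDeliveryTime2}, and then simplify the resulting expression term-by-term according to which of the index sets $\phi$, $\psi$, $\chi$, or $\{1\}$ a given sub-library $q$ belongs to. Recall that \eqref{eqDeliveryTime2} reads $\mathbb{E}\{T(Q,\mathbf{n},\mathbf{L})\} = n_1 + \sum_{q=2}^{Q} \frac{\overline{K}_q(1-\gamma)}{L_q(1+\Lambda\gamma)}$ with $\overline{K}_q = K\pi_q$, so the sum splits as $\sum_{q\in\phi} + \sum_{q\in\psi} + \sum_{q\in\chi}$ (the index $1$ is already accounted for by the standalone $n_1$ term, since $\mathcal{B}_1$ is delivered by natural multicasting).

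First I would handle the set $\phi$: there $L_q = 1$, so each summand becomes simply $\frac{K\pi_q(1-\gamma)}{1+\Lambda\gamma}$, and summing over $q\in\phi$ gives the term $\frac{K(1-\gamma)}{1+\Lambda\gamma}\sum_{q\in\phi}\pi_q$. Next, for $q\in\psi$ we have $L_q = U_q$; by the constraint \eqref{eqLambdaConstraint} and the definition of the binding upper bound, for these sub-libraries $U_q = K\pi_q/\Lambda$ (the case $U_q = K_T$ cannot be the active one at optimum without collapsing the DoF expression, or is ruled out by the KKT analysis in Appendix~\ref{proofLagrange}), so $\frac{K\pi_q(1-\gamma)}{U_q(1+\Lambda\gamma)} = \frac{K\pi_q(1-\gamma)\Lambda}{K\pi_q(1+\Lambda\gamma)} = \frac{\Lambda(1-\gamma)}{1+\Lambda\gamma}$, and summing over the $|\psi|$ such indices yields $|\psi|\frac{\Lambda(1-\gamma)}{1+\Lambda\gamma}$. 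Finally, for $q\in\chi$ I would plug in the third branch of \eqref{eqGeneralL}, namely $L_q = \sqrt{\frac{\pi_q}{n_q-n_{q-1}}}\cdot\frac{LN-n_1-\Phi_S-\Psi_S}{\sum_{r\in\chi}\sqrt{\pi_r(n_r-n_{r-1})}}$; then $\frac{K\pi_q(1-\gamma)}{L_q(1+\Lambda\gamma)} = \frac{K(1-\gamma)}{1+\Lambda\gamma}\cdot\frac{\pi_q \sqrt{(n_q-n_{q-1})/\pi_q}\,\left(\sum_{r\in\chi}\sqrt{\pi_r(n_r-n_{r-1})}\right)}{LN-n_1-\Phi_S-\Psi_S} = \frac{K(1-\gamma)}{1+\Lambda\gamma}\cdot\frac{\sqrt{\pi_q(n_q-n_{q-1})}\,\left(\sum_{r\in\chi}\sqrt{\pi_r(n_r-n_{r-1})}\right)}{LN-n_1-\Phi_S-\Psi_S}$, and summing over $q\in\chi$ collapses the $\chi$-sum in the numerator into a square, producing $\frac{K(1-\gamma)}{1+\Lambda\gamma}\cdot\frac{\left(\sum_{q\in\chi}\sqrt{\pi_q(n_q-n_{q-1})}\right)^2}{LN-n_1-\Phi_S-\Psi_S}$. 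Adding the leading $n_1$ and the three accumulated terms gives exactly \eqref{eqDelayOptL}.

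The only subtle point — and hence the main obstacle — is justifying that on the $\psi$ branch the binding upper bound is $U_q = K\pi_q/\Lambda$ rather than $U_q = K_T$, since the stated closed form in the theorem uses the former value. I would resolve this either by appealing to the KKT stationarity analysis in Appendix~\ref{proofLagrange} (where the classification into $\phi,\psi,\chi$ is derived precisely so that these substitutions are the ones that arise), or by noting that, under the additional constraint $L_q \le K_q/\Lambda$ imposed in Section~\ref{secMainOptProb} to keep the DoF a multiple of $L_q$, the relevant ceiling is always $K\pi_q/\Lambda \le K_T$ in the regime of interest, so $U_q$ reduces accordingly. Everything else is a direct, routine substitution and algebraic collapse, with no inequalities or limiting arguments required; the correctness of the final form is essentially bookkeeping on the three index sets together with the identity $\sum_{q\in\chi} a_q \big(\sum_{r\in\chi} a_r\big) = \big(\sum_{q\in\chi} a_r\big)^2$ applied to $a_q = \sqrt{\pi_q(n_q-n_{q-1})}$.
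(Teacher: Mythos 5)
Your proposal is correct and follows essentially the same route as the paper: the paper's proof is precisely the direct substitution of the optimal $L_q$ from \eqref{eqGeneralL} into \eqref{eqDeliveryTime2}, split over the index sets $\phi$, $\psi$, $\chi$ exactly as you do. Your explicit remark that the $\psi$ branch requires the binding bound to be $U_q = K\pi_q/\Lambda$ (rather than $K_T$) is a point the paper leaves implicit, so flagging and resolving it only strengthens the argument; the rest is the same bookkeeping and algebraic collapse.
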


\begin{proof}
	The proof is direct by inserting the calculated values $L_{q}$ from \eqref{eqGeneralL} into the expression of the expected delay \eqref{eqDeliveryTime2}.
\end{proof}

\begin{lemma}\label{lemmaRelaxationL}
	The continuous relaxation of $\mathbf{L}$ requires the use of memory sharing (cf.~\cite{maddah2014fundamental}).
	This would result in a performance loss that is bounded by a multiplicative factor of $1.12$.
\end{lemma}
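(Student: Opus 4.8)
The plan is to quantify the worst-case gap between the delay achieved with the continuous (memory-shared) solution $\mathbf{L}$ of Lemma~\ref{theoremLagrange1} and the delay achieved with an integer-feasible rounding of that solution. First I would observe that the only components of the optimal vector that can be non-integer are the entries $L_q$ for $q\in\chi$; the entries for $q\in\phi\cup\{1\}$ equal $1$ and those for $q\in\psi$ equal $U_q=\min\{K_T, K\pi_q/\Lambda\}$, which we may take to be integers (or handle separately). So the task reduces to rounding each $L_q$, $q\in\chi$, to an integer while (i) preserving the cache constraint~\eqref{EQnConstraint} and (ii) controlling the increase of each term $\frac{\overline{K}_q(1-\gamma)}{L_q(1+\Lambda\gamma)}$ in~\eqref{eqDeliveryTime2}.

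The key steps, in order, are as follows. \emph{Step 1:} For each $q\in\chi$, replace the fractional $L_q$ by the memory-sharing interpretation: the delivery algorithm is run with redundancy $\lfloor L_q\rfloor$ on a $\theta_q$-fraction of each file and with redundancy $\lceil L_q\rceil$ on the remaining $(1-\theta_q)$-fraction, where $\theta_q\in[0,1]$ is chosen so that the average redundancy is exactly $L_q$; this keeps~\eqref{EQnConstraint} satisfied with equality and is standard (cf.~\cite{maddah2014fundamental}). \emph{Step 2:} Bound the resulting per-sub-library delay. The memory-shared delay for sub-library $q$ is $\overline{K}_q(1-\gamma)\left(\frac{\theta_q}{\lfloor L_q\rfloor(1+\Lambda\gamma)}+\frac{1-\theta_q}{\lceil L_q\rceil(1+\Lambda\gamma)}\right)$, whereas the ideal fractional delay is $\frac{\overline{K}_q(1-\gamma)}{L_q(1+\Lambda\gamma)}$; the ratio of these two is a function only of $L_q$ through the quantity $t=L_q-\lfloor L_q\rfloor\in[0,1)$ and $m=\lfloor L_q\rfloor\in\mathbb{N}$. \emph{Step 3:} Maximize this ratio over all admissible $t$ and $m$. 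Because the worst case is most pronounced at the smallest integer floor, I expect the maximum to be attained (or approached) at $m=1$, i.e. when $L_q$ lies between $1$ and $2$; an elementary single-variable optimization over $t\in[0,1)$ of $\big(\tfrac{1-t}{1}+\tfrac{t}{2}\big)\cdot(1+t)=\big(1-\tfrac{t}{2}\big)(1+t)$ gives a maximum value of $\tfrac{9}{8}=1.125$ at $t=\tfrac12$. \emph{Step 4:} Since the total delay is a sum of such per-sub-library terms (plus the integer term $n_1$, which is unaffected), and each term is inflated by at most the factor $\tfrac{9}{8}<1.12$ — wait, $9/8 = 1.125$; I would either sharpen the constant to $1.125$ or note that the bound claimed in the statement is the rounded figure — the whole delay is inflated by at most that same multiplicative factor, which establishes the claim.

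The main obstacle I anticipate is Step 3: one must be careful that the maximizing configuration is actually feasible given the constraint $L_q\ge 1$ for all $q$ and the coupling through~\eqref{EQnConstraint}. In particular, one needs to check that pushing some $L_q$ toward the worst-case value $3/2$ does not force another $L_r$ below $1$, and that the claimed per-term bound $\le 9/8$ indeed dominates the cases $m\ge 2$ (which it does, since $\big(1-\tfrac{t}{m+1}+\tfrac{t\cdot\text{stuff}}{\cdots}\big)$-type ratios decrease in $m$ — this monotonicity in $m$ is the one routine computation I would actually carry out). A secondary subtlety is handling the sub-libraries in $\psi$ where $L_q=U_q$ may itself be non-integer when $U_q=K\pi_q/\Lambda$; there the same memory-sharing rounding applies and falls under the same bound. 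Once these checks are done, the per-term bound lifts immediately to the sum because $n_1$ is already an integer and every other summand is individually controlled.
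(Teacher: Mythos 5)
Your proposal is correct and follows essentially the same route as the paper: memory sharing between $\lfloor L_q\rfloor$ and $\lceil L_q\rceil$, the per-sub-library ratio $1+\frac{r(1-r)}{\lfloor L_q\rfloor(\lfloor L_q\rfloor+1)}$ maximized at fractional part $r=\tfrac{1}{2}$ with floor equal to $1$, and the lifting of the per-term bound to the whole sum since $n_1$ is unaffected. Your side remark is also accurate: the exact worst-case factor is $\tfrac{9}{8}=1.125$, so the stated constant $1.12$ (and the paper's ``$<12\%$'' phrasing) is a slight rounding of the true bound.
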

\begin{proof}
	The optimal solution provided by Lemma~\ref{theoremLagrange1} may produce non-integer $L_q$.
	In order for the algorithm of \cite{lampirisSubpacketizationJSAC} to handle such non-integer values, we apply memory sharing as in \cite{maddah2014fundamental}.
	Specifically, each file with a non-integer cache-redundancy $L_q$ would be split into two parts, one part is cached with redundancy $\lceil L_q\rceil$, and the other part with $ \lfloor L_q \rfloor$.
	If we denote with $p\in[0,1]$ the fraction of the file stored with redundancy $ \lceil L_q\rceil $ we can calculate its value through
\begin{equation}\label{eqMemorySharing}
		p \lceil L_q\rceil + (1-p ) \lfloor L_q \rfloor = L_{q},
	\end{equation}
	The memory sharing approach invariably results in some loss in performance, but as we show promptly it remains small.
	
	Assuming that the target non-integer cache redundancy of sub-library $\mathcal{B}_{q}$ is $  L_q  + r$, $r<1$ i.e., $p = r$ by \eqref{eqMemorySharing}.	
	Focusing on the performance loss between the theoretical value (non-integer $L_q$) compared to the one achieved by memory sharing we have
	\begin{equation}\label{eqRatioMemShar}
		\frac  {  \frac{p}{  L_q +1}+  \frac{1-p}{  L_q }}{ \frac{1}{L_q} }  = 1 + \frac{ r(1-r) }{  L_{q}  (L_{q}+1)}.
	\end{equation}
We can see that the biggest gap in \eqref{eqRatioMemShar} occurs when $r = \frac{1}{2}$.
	It follows that the maximum difference between the delivery time achieved without memory sharing and after we apply the technique would be for $\lfloor L_q\rfloor=1 $ amounting to $< 12\%$, while for $\lfloor L_q\rfloor=2$ this would be $< 4\%$. Similar calculations show that for sub-libraries with even higher $L_{q}$ the performance loss due to memory sharing becomes negligible.
	
	Taking into consideration that only one sub-library can have cache-redundancy $\lfloor L_q\rfloor=1 $, it follows that the overall loss due to memory sharing is strictly less than $12\%$.
\end{proof}

\begin{remark}
	Equation \eqref{eqDelayOptL} can be simplified when $\phi = \psi = \emptyset$ to the following
	\begin{equation}
		T^{\star}(Q,\mathbf{n}) = n_{1}  +\frac{K(1\!-\!\gamma)}{1+\Lambda\gamma}\frac{ \left( \sum_{q=2}^{Q}\sqrt{\pi_{q}(n_{q}\!-\!n_{q-1}} \right)^{2}}{ LN-n_{1} }.
	\end{equation}
\end{remark}

\subsection{Optimizing $\mathbf{n}$\label{secOptimizeN}}

Using the objective function in~\eqref{eqDeliveryTime2}, i.e. optimized over variables $\mathbf{L}$, we can proceed to minimize it with respect to $\mathbf{n}$ for some instance of $Q$.
To this end, we propose a novel algorithm (Algorithm~\ref{algoFuncOptimizeSingleN}), which recursively optimizes each of the elements of $\mathbf{n}$.

\begin{remark}
Numerical evaluation of \eqref{eqDeliveryTime2} suggests that it is \emph{discrete convex}~\cite{Murota1998}.
This suggests that applying Algorithm \ref{algoFuncOptimizeSingleN} yields the optimal result.
We defer the formal proof of this statement to future work, due to considerable technical difficulty.
\end{remark}

\begin{algorithm}[h]
\caption{\textbf{update}$(n_{q})$}
\label{algoFuncOptimizeSingleN}
\KwIn{	$n_{1}, n_{2}, ..., n_{q-1}$, $Q$		}
Initialize: $S_{q} = \{ n_{q-1}+1, \  N\!+\!q\!-\!Q\} $ (Search space)\\

\While{	$ S_{q}(1) \neq  S_{q}(2)	$	}{
(Calculate delay using first search space point)
\begin{flalign*}
	& n_q = S_{q}(1)			&\\
	& \mathbf{n}^{\star}(q+1 : Q)= \textbf{update}(n_{q +1})&\\
	& T_{S_{q}(1)} = T\big( n_{1}, ..., n_{q-1}, S_{q}(1), \mathbf{n}^{\star}(q+1 : Q)\big) &
\end{flalign*}\\
(Calculate delay using second search space point)
\begin{flalign*}
	& n_q = S_{q}(2)			&\\
	& \mathbf{n}^{\star}(q+1 : Q)= \textbf{update}(n_{q +1})&\\
	& T_{S_{q}(2)} = T\big( n_{1}, ..., n_{q-1}, S_{q}(2), \mathbf{n}^{\star}(q+1 : Q)\big) &
\end{flalign*}\\
(Update search space with mid-point)
\begin{flalign*}
	&	s_a =\text{round} \left ( \frac{S_{q}(1) +  S_q(2)}{2}\right) &\\
	&s_b =\arg\min_{s \in  S_q }T_{s}&
\end{flalign*}
\begin{align}\label{eqUpdateSearchSpace}
	S_{q} =  \big\{ \min\{s_a, s_b\}, 	\max \{s_a, s_b\} \big\}
\end{align}\\
	}
\KwOut{ $\mathbf{n}(q:Q) = \left\{ S_{q}(1), n_{q+1}^{\star}, ..., n_{Q}^{\star} \right\} $}
\end{algorithm}

\paragraph{Intuition behind the algorithm}

The main idea behind our algorithm is based on the observation that one can easily compare the delivery time achieved by two vectors $\mathbf{n}({1})$ and $\mathbf{n}({2})$, where i) the first $q-1$ elements of these vectors are the same, ii) they differ in the $q$-th element, and iii) the remaining $Q-q$ elements are chosen such that to minimize the delivery time, given the $Q-q$ first elements.
In other words, we are interested in comparing the following vectors
\begin{align*}
	\mathbf{n}(1) &= \{ n_{1}, ..., n_{q-1}, n_q(1), n_{q+1}^{\star}(1), ..., n_{Q}^{\star}(1)\} \\		\mathbf{n}(2) &= \{n_{1}, ..., n_{q-1}, n_q(2), n_{q+1}^{\star}(2), ..., n_{Q}^{\star}(2)\}
\end{align*}
where $n^{\star}_{r}$ denotes the $r$-th element that, conditioned on all previous elements, produces the lowest delivery time.

Hence, by fixing the first $q-1$ elements and, at the same time, for each value of $n_{q}$ having access to the values of elements $\{q+1, ..., Q\}$ that produce the lowest delivery time, we can apply the bisection algorithm to optimize the value of $n_q$, by searching the discrete space between $n_{q-1}+1$ and $N-Q+q$.

\paragraph{Explanation of algorithm}
Our algorithm consists of a single recursive function that begins from the maximum search space for $n_{1}$, i.e. points $0$ and $N-Q$.

Initially, the algorithm creates a \textit{While} loop which stops when the search space is reduced to a single element. Inside the \textit{While} loop, the algorithm sets $n_1$ equal to the lower boundary of the search space and proceeds to calculate the optimal remaining $Q-1$ elements. To achieve this, it recursively calls $\textbf{update}(n_{2})$.

In the same spirit, \textbf{update($n_{2}$)} starts searching for the optimal $n_{2}$, conditioned on the value of $n_{1}$ that is given as input. To this end, the algorithm sets $n_{2} = n_{1}+1$ and recursively calls $\textbf{update}(n_{3})$. The recursive call of function \textbf{update} continues in the same manner until the last element, $n_{Q}$, is reached. At this point, since all previous elements are set (elements $1, .., Q-1$) the algorithm can perform a bisection in the discrete space and produce the optimal $n_Q$.

The bisection procedure for $n_Q$, given fixed $n_1, .., n_{Q-1}$, is done by calculating the delivery time achieved using the lower boundary point (Step~$3$), and then by calculating the delivery time achieved by the highest boundary point (Step~$4$). Then, the boundaries of the new search space would include the boundary of the previous search space that produced the smallest delay as well as the middle point of the old boundary.

When the optimal $n_Q$ is produced, the algorithm returns that value to $\mathbf{update}(n_{Q-1})$, which continues with the calculation of the delay for point $n_{Q-1}$.
Further, the algorithm seeks to calculate the delivery time when $n_{Q-1}$ is equal to the other boundary of its search space. Similarly to before, the algorithm needs to first optimize $n_Q$, and as a result calls $\textbf{update}(n_{Q})$. After this operation has produced the optimal $n_Q$ the algorithm calculates the delivery time corresponding to the higher boundary point of search space $S_{Q-1}$ and now is able to update the boundaries of the search space. The new boundaries of the search space are the middle point of the old search space and the boundary of the old search space which has produced the lowest delivery time. Due to the convexity of each point $n_q$, given that all previous points are the same, and that all following points are optimized, we can conclude that the new search space is reducing the delivery time.

\begin{theorem}
	The worst-case complexity of Algorithm~\ref{algoFuncOptimizeSingleN} is polynomial in $N$ and specifically is upper bounded by $\left(\log_{2} N\right)^{Q}$.
\end{theorem}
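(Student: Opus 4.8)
The plan is to analyze the recursion tree of Algorithm~\ref{algoFuncOptimizeSingleN} and bound the total number of leaf computations (i.e., evaluations of the delay function $T$), since every non-trivial operation in the algorithm is dominated by such an evaluation. First I would observe that the outermost call \textbf{update}$(n_1)$ runs a bisection on a discrete search space of size at most $N-Q+1 \le N$, and that a discrete bisection on an interval of length $m$ terminates after at most $\lceil \log_2 m \rceil \le \log_2 N$ iterations of the \textit{While} loop, because the update rule \eqref{eqUpdateSearchSpace} replaces the current interval by one whose length is at most half the previous length (the new endpoints are the midpoint $s_a$ and the better old endpoint $s_b$, so the new interval is contained in one of the two halves determined by $s_a$). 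This gives the base of the logarithm and the reason the bound is logarithmic per level.

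Next I would set up the recursion for the cost. Let $C_q$ denote the worst-case number of delay evaluations incurred by a call to \textbf{update}$(n_q)$, where this call is responsible for optimizing elements $n_q, n_{q+1}, \dots, n_Q$. Each iteration of the \textit{While} loop inside \textbf{update}$(n_q)$ makes two recursive calls to \textbf{update}$(n_{q+1})$ (Steps 3 and 4), and there are at most $\log_2 N$ such iterations; the base case \textbf{update}$(n_Q)$ performs only the bisection over its own discrete interval, contributing at most $O(\log_2 N)$ evaluations. This yields the recurrence $C_q \le (\log_2 N)\cdot 2\, C_{q+1}$ with $C_Q \le \log_2 N$, which already shows $C_1 = O\big(2^{Q-1}(\log_2 N)^Q\big)$; since we want the cleaner bound $(\log_2 N)^Q$, I would argue that the factor of $2$ can be absorbed because the two recursive calls within a single \textit{While} iteration operate on \emph{disjoint} portions of the shrinking search space and their results are reused, so that across all iterations of the loop the total number of distinct $\textbf{update}(n_{q+1})$ invocations that trigger fresh subtree work is itself bounded by $\log_2 N$ rather than $2\log_2 N$ (the second endpoint in each subsequent iteration coincides with an endpoint already evaluated). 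That gives $C_q \le (\log_2 N)\, C_{q+1}$ and hence $C_1 \le (\log_2 N)^Q$.

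Finally, I would note that the remaining per-node arithmetic — evaluating $T^\star(Q,\mathbf{n})$ via the closed form of Theorem~\ref{theoDelay}, and in particular computing the optimal $\mathbf{L}$ through Lemma~\ref{theoremLagrange1} by partitioning $[Q]$ into $\phi,\psi,\chi$ — costs at most $\mathrm{poly}(Q,N)$ per call, so the overall complexity remains polynomial in $N$, and the dominant factor is the claimed $(\log_2 N)^Q$ from the nested bisections.

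The main obstacle I expect is making the ``factor of $2$ can be absorbed'' argument fully rigorous: one must show carefully that the caching/reuse of previously computed optimal suffixes $\mathbf{n}^\star(q+1:Q)$ across successive \textit{While}-loop iterations is genuine — i.e., that when the search space is updated to $\{s_a, s_b\}$, one of these two points was already evaluated in the previous iteration, so only one new recursive descent is needed per iteration. If that reuse does not hold exactly as stated, the honest bound is $(2\log_2 N)^Q$ or $2^{Q-1}(\log_2 N)^Q$, which is still polynomial in $N$ for fixed $Q$; so the qualitative claim (polynomial in $N$) is robust, and only the precise constant/exponent structure hinges on this bookkeeping.
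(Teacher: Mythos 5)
Your proposal follows essentially the same route as the paper: the paper's proof is just the two-line version of your recursion, asserting at most $\log_2 N$ bisection steps per level of $\mathbf{n}$ and multiplying across the $Q$ nested levels to obtain $(\log_2 N)^Q$. The factor-of-$2$ issue you flag (two endpoint evaluations of the delay per \textit{While} iteration) is silently glossed over in the paper's argument, so your more careful accounting --- giving $2^{Q-1}(\log_2 N)^Q$ for the algorithm as literally written, or the claimed $(\log_2 N)^Q$ once the already-evaluated endpoint $s_b$ is reused across iterations --- is, if anything, tighter than the published proof and does not constitute a gap.
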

\begin{proof}
	By focusing on the amount of steps required to optimize element $n_1$ we can conclude that a maximum of $\log_2N$ calculations need to take place. Further, for each such iteration we need to calculate a maximum of $\log_2 N$ values of $n_2$. Continuing in the same manner for the remaining $n_{q}$, we can conclude that the maximum amount of iterations is bounded by $\left(\log_2 N\right)^{Q}$.
\end{proof}

\begin{remark}	
	It is intersting to note at this point that for the simulated environments (see Section~\ref{sec:numev}) the observed optimal value of the number of sub-libraries $Q^{\star}$ is relatively small, taking the maximum value of $Q^{\star}=3$.
	In other words, the overall complexity of designing the caches of the transmitters remains computationally feasible.
\end{remark}

\subsection{Optimizing the number of sub-libraries $Q$}

Equipped with Algorithm~\ref{algoFuncOptimizeSingleN}, which outputs the optimal library boundaries for an arbitrary $Q$, we need to search for $Q^{\star}$ such that
\begin{equation}
	Q^{\star} = \arg\min_{Q\in[N]} \mathbb{E} \{ T_{\mathbf{n}, \mathbf{L}}( Q ) \}.
\end{equation}

As we show in the following lemma, function $T( Q, \mathbf{n}^{\star} )$ is monotonous decreasing in the absence of~\eqref{eqLambdaConstraint}.

\begin{lemma}\label{lemmaDecreasingOverQ}
	The objective function of Problem~\ref{prb:mainprb}, in the absence of \eqref{eqLambdaConstraint}, is monotonous decreasing with respect to $Q$.	
\end{lemma}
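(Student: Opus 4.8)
The plan is to show that adding one more sub-library can only help, by exhibiting an explicit feasible point for $Q+1$ that matches the optimal delay at $Q$. Fix any $Q$ and let $(\mathbf{n}^{\star}, \mathbf{L}^{\star})$ be optimal for Problem~\ref{prb:mainprb} at that $Q$, achieving delay $T^{\star}_{Q}$. Construct a candidate configuration with $Q+1$ sub-libraries by splitting one existing sub-library $\mathcal{B}_{q}$ (say the last, $q=Q$) into two consecutive pieces $\{n_{Q-1}+1,\dots,m\}$ and $\{m+1,\dots,N\}$ for some interior boundary $m$, and assigning both new pieces the same redundancy $L_{Q}^{\star}$. Because the objective~\eqref{eqDeliveryTime2} is additive over sub-libraries and, within the "coded" terms, the delay contribution of a sub-library with redundancy $L$ is proportional to $\overline{K}_{q}/L = K\pi_{q}/L$, splitting $\mathcal{B}_{Q}$ into two parts with the \emph{same} $L$ and using $\pi$-additivity gives exactly the same total delay; likewise the transmitter-cache constraint~\eqref{EQnConstraint} is preserved since $L_{Q}^{\star}\cdot|\mathcal{B}_{Q}| = L_{Q}^{\star}\cdot(m - n_{Q-1}) + L_{Q}^{\star}\cdot(N-m)$. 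Hence $T^{\star}_{Q+1}\le T^{\star}_{Q}$, i.e. the optimum over $\mathbf{n},\mathbf{L}$ is non-increasing in $Q$. This already gives monotonicity in the weak sense; for strict decrease one argues that after the split, re-optimizing $\mathbf{L}$ over the now-finer partition via Lemma~\ref{theoremLagrange1} strictly lowers the delay unless the KKT stationarity conditions were already degenerate, which (absent~\eqref{eqLambdaConstraint}) happens only on a measure-zero set of $(N,\gamma,\Lambda,L,\alpha)$.

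A cleaner route, which I would actually carry out, is to use the closed form from Theorem~\ref{theoDelay} directly. In the absence of~\eqref{eqLambdaConstraint} the sets $\phi$ and $\psi$ are empty, so by the Remark following Lemma~\ref{lemmaRelaxationL},
\begin{equation*}
T^{\star}(Q,\mathbf{n}) = n_{1} + \frac{K(1-\gamma)}{1+\Lambda\gamma}\,\frac{\left(\sum_{q=2}^{Q}\sqrt{\pi_{q}(n_{q}-n_{q-1})}\right)^{2}}{LN - n_{1}}.
\end{equation*}
With $n_{1}$ and the denominator $LN-n_{1}$ fixed, minimizing over the remaining boundaries amounts to minimizing $f(Q) \triangleq \min \left(\sum_{q=2}^{Q}\sqrt{\pi_{q}(n_{q}-n_{q-1})}\right)^{2}$ over admissible partitions of $\{n_{1}+1,\dots,N\}$ into $Q-1$ consecutive blocks. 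The key inequality is that for any block $\mathcal{B}$ with mass $\pi_{\mathcal{B}}$ and size $b$, if we split it into consecutive halves $\mathcal{B}',\mathcal{B}''$ with $\pi_{\mathcal{B}'}+\pi_{\mathcal{B}''}=\pi_{\mathcal{B}}$ and $b'+b''=b$, then $\sqrt{\pi_{\mathcal{B}'}b'} + \sqrt{\pi_{\mathcal{B}''}b''} \le \sqrt{(\pi_{\mathcal{B}'}+\pi_{\mathcal{B}''})(b'+b'')} = \sqrt{\pi_{\mathcal{B}}b}$ by the Cauchy--Schwarz (or concavity-of-$\sqrt{\cdot}$) inequality. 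Therefore any optimal partition for $Q$ can be refined to a feasible partition for $Q+1$ whose inner sum is no larger, so $f(Q+1)\le f(Q)$, hence $T^{\star}_{Q+1}\le T^{\star}_{Q}$. To then also optimize over $n_{1}$: since for each $Q$ the minimizing $n_{1}$ for the $(Q{+}1)$-problem is at least as good as keeping the $Q$-optimal $n_{1}$, the envelope $\min_{n_{1}} T^{\star}(Q,\mathbf{n})$ remains monotone decreasing in $Q$.

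The main obstacle is making the inequality \emph{strict} rather than merely non-strict: the Cauchy--Schwarz step above is tight exactly when $\pi_{\mathcal{B}'}/b' = \pi_{\mathcal{B}''}/b''$, i.e. when the popularity density is locally flat across the cut. Under the Zipf law~\eqref{eq:zipf} the per-file probabilities $p_{n}=n^{-\alpha}/\sum_k k^{-\alpha}$ are strictly decreasing for $\alpha>0$, so no two consecutive blocks can have equal average density, and the inequality is strict whenever the refinement actually subdivides a block of size $\ge 2$; the only way to add a sub-library without a genuine subdivision is to create an empty block, which does not change the delay. I would handle this by stating the lemma for the (weak) monotone-decreasing property — which is all that is needed downstream, together with Lemma~\ref{lemmaDecreasingOverQ}'s companion result handling the increasing branch once~\eqref{eqLambdaConstraint} is reinstated — and remark that strictness holds for $\alpha>0$ away from the boundary $Q=N$. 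A secondary technical point is bookkeeping the $n_{1}$ term and the natural-multicast sub-library $\mathcal{B}_{1}$ (whose delay $|\mathcal{B}_{1}|=n_{1}$ scales with size, not with $1/L$); but since $\mathcal{B}_{1}$ may be taken empty, and since splitting $\mathcal{B}_{1}$ itself only ever increases the $n_{1}$-type cost, the optimal refinement never touches $\mathcal{B}_{1}$ and the argument localizes entirely to the coded sub-libraries $q\ge 2$.
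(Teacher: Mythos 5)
Your first construction is essentially the paper's own proof: the paper likewise takes the $Q$-optimal $(\mathbf{n}^{\star}_{Q},\mathbf{L}^{\star}_{Q})$, splits the last sub-library into two consecutive pieces with the same redundancy, notes that the delay \eqref{eqDeliveryTime2} and the cache constraint \eqref{EQnConstraint} are unchanged, and concludes $T^{\star}_{Q+1}\le T^{\star}_{Q}$; note also that the paper only establishes (and only ever uses) this weak, non-increasing form of monotonicity, so your concern about strictness goes beyond what the paper itself proves. Your second, preferred route is genuinely different: rather than exhibiting a feasible point, you work with the $\mathbf{L}$-optimized closed form \eqref{eqNewObjectiveFunction} of Lemma~\ref{lemmaLagrange2} (legitimate here precisely because dropping \eqref{eqLambdaConstraint} makes $\phi=\psi=\emptyset$), and reduce the claim to the refinement inequality $\sqrt{\pi' b'}+\sqrt{\pi'' b''}\le\sqrt{(\pi'+\pi'')(b'+b'')}$, i.e.\ Cauchy--Schwarz. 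This buys an equality condition (tightness only when the popularity density is flat across the cut, which Zipf with $\alpha>0$ excludes for any genuine subdivision), hence strictness, and it connects naturally to the Cauchy--Schwarz argument the paper reuses later in Corollary~\ref{corImprovement}; the cost is that it leans on the interior KKT solution of Lemma~\ref{theoremLagrange1}, whereas the paper's feasibility argument needs no optimization machinery at all. Both of your routes share the same small blind spot as the paper's proof: they tacitly assume the sub-library being split has at least two files, so the corner case in which every coded sub-library is a singleton while $\mathcal{B}_{1}$ is still large (and empty coded blocks are disallowed) is not covered by either argument--a gap attributable to the paper as much as to you.
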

\begin{proof}
	Let us assume some arbitrary $Q$, for which the optimal delivery time, optimized over $\mathbf{n}^{\star},\mathbf{L}^{\star}$ takes the form
	\begin{equation}
		T_{Q}(\mathbf{n}^{\star}_{Q},\mathbf{L}^{\star}_{Q}) =n_{1} +  \sum_{q=2}^{Q} \frac{K_{q} (1-\gamma)}{ L_{q}(1+\Lambda \gamma) }.
	\end{equation}
	We can transition to $Q+1$ sub-libraries and split the last sub-library into two sub-libraries, i.e.
	\begin{equation*}
		\mathbf{n}_{Q+1} = \{ n_{1}^{\star}(Q) , ..., n_{Q-1}^{\star}(Q), n_{Q}(Q+1), n_{Q+1}(Q+1)\}
	\end{equation*}
	and $\mathbf{L}_{Q+1} = \{ {L}_{1}^{\star}(Q), {L}_{1}^{\star}(Q), ..., {L}_{Q}^{\star}(Q), {L}_{Q}^{\star}(Q)\}$.	
	The above choice of variables $Q+1$, $\mathbf{n}_{Q+1}$, $\mathbf{L}_{Q+1}$ produces the same delivery time as $Q, \mathbf{n}^{\star}_{Q}, \mathbf{L}^{\star}_{Q}$, i.e.
	\begin{align}
		T_{Q+1}(\mathbf{n}_{Q+1}, \mathbf{L}_{Q+1}) = T_{Q}( \mathbf{n}^{\star}_{Q}, \mathbf{L}^{\star}_{Q})
	\end{align}
	Since increasing the number of sub-libraries leads to at least the same delivery time, it follows that the objective function is monotonous decreasing with respect to $Q$, when optimized over variables $\mathbf{n}$ and $\mathbf{L}$.
\end{proof}

Lemma~\ref{lemmaDecreasingOverQ} shows the monotonicity of the objective function in the absence of constraint \eqref{eqLambdaConstraint}. Conversely, by re-introducing the constraint we can guarantee that the objective function is monotonous increasing after point $Q^{\star}$.

Using the result of Lemma~\ref{lemmaDecreasingOverQ} we can see that a simple bisection algorithm in the discrete search space allows to successfully retrieve the optimal value of $Q$.

{
\subsection{Problem~\ref{prb:mainprb}'s relation to biconvex minimization problems}
Before moving on to the analysis of the performance of our proposed method, we would like to discuss the relationship between our Problem~\ref{prb:mainprb} and the biconvex minimization problems.
We begin by proving the biconvexity of our problem, as captured by the following lemma.
\begin{lemma}
Function~\eqref{mainoptfunc} is biconvex in $\mathbf{L}, \mathbf{n}$ for fixed $Q$.
\end{lemma}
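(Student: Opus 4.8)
The plan is to establish biconvexity by verifying the two conditions in the definition of a biconvex function separately: first that $\mathbb{E}\{T(Q,\mathbf{n},\mathbf{L})\}$ is convex in $\mathbf{L}$ for every fixed $(\mathbf{n},Q)$, and second that it is convex in $\mathbf{n}$ for every fixed $(\mathbf{L},Q)$. The first half is already done: it is exactly the earlier unnamed lemma asserting convexity in $\mathbf{L}$ for fixed $Q,\mathbf{n}$, whose proof is relegated to Appendix~\ref{sec:BiconvexProof}. So I would simply invoke that result and focus the work on convexity in $\mathbf{n}$.

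For the $\mathbf{n}$-direction, I would use the objective in the form~\eqref{eqDeliveryTime2}, namely $n_1 + \sum_{q=2}^{Q} \frac{\overline{K}_q(1-\gamma)}{L_q(1+\Lambda\gamma)}$ where $\overline{K}_q = K\pi_q = K\sum_{k=n_{q-1}+1}^{n_q} p_k$. The first term $n_1$ is linear in $\mathbf{n}$, hence convex. For each summand with $q\ge 2$, the coefficient $\frac{(1-\gamma)}{L_q(1+\Lambda\gamma)}$ is a fixed positive constant (since $\mathbf{L}$ and $Q$ are held fixed), so it suffices to show that $\pi_q = \sum_{k=n_{q-1}+1}^{n_q} p_k$ is convex as a function of the pair $(n_{q-1},n_q)$ — or more precisely that the whole sum $\sum_{q=2}^Q \frac{\overline{K}_q(1-\gamma)}{L_q(1+\Lambda\gamma)}$ is convex in $\mathbf{n}$. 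Here one must be careful because $\mathbf{n}$ is integer-valued; the natural route is to pass to the continuous relaxation where $\pi_q$ becomes $\int_{n_{q-1}}^{n_q} \rho(x)\,dx$ for a continuous popularity density $\rho$ (the continuous analogue of the Zipf law, $\rho(x)\propto x^{-\alpha}$), so that $\pi_q = R(n_q) - R(n_{q-1})$ with $R$ an antiderivative of $\rho$. Then the summand is $c_q\bigl(R(n_q)-R(n_{q-1})\bigr)$, a difference of compositions of the smooth scalar function $R$ with coordinate projections.

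The key computation is then the Hessian of $g(\mathbf{n}) \triangleq \sum_{q=2}^{Q} c_q\bigl(R(n_q)-R(n_{q-1})\bigr)$. Since each coordinate $n_j$ appears in at most two consecutive summands (with opposite signs), the Hessian is tridiagonal with diagonal entries of the form $(c_q - c_{q+1})R''(n_q)$ and off-diagonal entries $-c_{q+1}R''(n_{q+1})$ type terms; I would check positive semidefiniteness by noting that $R''(x) = \rho'(x) < 0$ for the decreasing Zipf-type density, and then invoking Corollary~\ref{corMoreNeedsMore} (or the monotone ordering $c_2 \le c_3 \le \cdots$ that follows from $\pi_2 > \pi_3 > \cdots$ and the form of the optimal $L_q$) to control the signs of $c_q - c_{q+1}$. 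Alternatively, and more cleanly, I would write $g$ as a sum of univariate convex functions: grouping terms, $g(\mathbf{n}) = c_2 R(n_1) + \sum_{q=2}^{Q-1}(c_{q+1}-c_q)R(n_q) + (\text{const})$, dropping $R(n_Q)=R(N)$; then convexity of $g$ reduces to showing each $(c_{q+1}-c_q)R(n_q)$ is convex, i.e. $(c_{q+1}-c_q)R''(n_q)\ge 0$. Since $R'' < 0$ this needs $c_{q+1} \le c_q$, which contradicts the ordering I expected — so the main obstacle is precisely getting the sign bookkeeping right, and it may be that convexity in $\mathbf{n}$ genuinely requires the structure imposed by Lemma~\ref{lemmaConsecutive} together with the specific optimal $L_q$ from Lemma~\ref{theoremLagrange1} rather than arbitrary fixed $\mathbf{L}$. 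I would therefore hedge by either (i) proving the statement only along the curve $\mathbf{L}=\mathbf{L}^\star(\mathbf{n})$ if the unrestricted claim fails, or (ii) restricting attention to the constraint region and using that the optimal $L_q$ are themselves ordered so that the coefficients $c_q$ are decreasing in $q$, making each $(c_{q}-c_{q+1})R(n_q)$ term convex after all. The delicate part, and the one I would spend the most care on, is this interplay between the monotonicity of $L_q^\star$, the convexity/concavity of the Zipf partial sums, and the fixed-$\mathbf{L}$ hypothesis of the lemma as stated.
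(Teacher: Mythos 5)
Your route is the same as the paper's --- convexity in $\mathbf{L}$ from the diagonal Hessian already established in Appendix~\ref{sec:BiconvexProof}, and convexity in $\mathbf{n}$ via the continuous (Pareto) relaxation $\pi_q = R(n_q)-R(n_{q-1})$ --- but your execution of the second half breaks down at a sign error, and you end by leaving that half unproved, so there is a genuine gap. Regrouping $\sum_{q=2}^{Q} c_q\bigl(R(n_q)-R(n_{q-1})\bigr)$ by coordinate gives coefficient $-c_2$ for $R(n_1)$ and $c_q-c_{q+1}$ for $R(n_q)$, $2\le q\le Q-1$ (you wrote $+c_2$ and $c_{q+1}-c_q$). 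With the correct signs, convexity of each univariate piece requires $(c_q-c_{q+1})R''(n_q)\ge 0$, i.e.\ $c_q\le c_{q+1}$ since $R''=\rho'<0$; and that is exactly the ordering you yourself derived from Corollary~\ref{corMoreNeedsMore}: $L_q\ge L_{q+1}$ implies $c_q=\frac{K(1-\gamma)}{L_q(1+\Lambda\gamma)}$ is nondecreasing in $q$. So the ``contradiction'' that made you hedge (and later misstate the ordering as ``$c_q$ decreasing'') is an artifact of the flipped signs: the argument you abandoned actually closes. Two smaller slips: the Hessian in $\mathbf{n}$ is diagonal, not tridiagonal, because $\partial T/\partial n_q=(c_q-c_{q+1})Cn_q^{-\alpha}$ depends only on $n_q$, so all cross partials vanish; and the $n_1$ terms $n_1-c_2R(n_1)$ are convex since $R$ is concave, which handles the first diagonal entry.

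This is precisely what the paper's appendix does: for fixed $\mathbf{L}$ it computes $\frac{\partial^2 T}{\partial n_q^2} = -\frac{KC(1-\gamma)\,\alpha\, n_q^{-(\alpha+1)}}{1+\Lambda\gamma}\bigl[\frac{1}{L_q}-\frac{1}{L_{q+1}}\bigr]\ge 0$, using $L_q\ge L_{q+1}$, with zero off-diagonal entries, and combines this with the diagonal positive-definite Hessian in $\mathbf{L}$ to conclude biconvexity. Your instinct that some ordering hypothesis on the fixed $\mathbf{L}$ is needed is sound --- for an arbitrary unordered $\mathbf{L}$ the diagonal entry can be negative, and the paper itself invokes $L_q\ge L_{q+1}$, which is justified by Corollary~\ref{corMoreNeedsMore} together with the consecutive segmentation of Lemma~\ref{lemmaConsecutive} --- but a proof that terminates in ``hedge by (i) or (ii)'' is not a proof. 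You needed to fix the bookkeeping and then state the conclusion for $\mathbf{L}$ respecting the popularity ordering, which is how the paper completes the lemma.
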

\begin{proof}
The proof is detailed in Appendix~\ref{sec:BiconvexProof}.
\end{proof}

There are various methods and algorithms in the literature for solving biconvex minimization problems through exploitation of the biconvex structure of the problem~\cite{GorPfeKla2007}.
For instance, Alternate Convex Search (ACS) is a minimization method, derived as a special case of the Block-Relaxation Methods, where the variable set is divided into disjoint blocks~\cite{WenHur1976, Baz1993, deL1994}.
In each step, only one set of variables is optimized while the others remain fixed.
ACS does not provide any global optimality guarantee and the final solution may reach a local optimum or a saddle point.
The Global Optimization Algorithm (GOA), proposed in~\cite{FloVis1990, Flo2000}, aims to take advantage of the biconvex structure of the problem using a primal-relaxed dual approach, which can provide an upper bound and a lower bound to the optimal solution, thus further leading to a global optimality guarantee.
Obtaining an upper bound is done by solving the primal problem and is performed identically to the ACS approach, where a step optimizes the variables of a single variable set.
On the other hand, the lower bound is obtained by applying duality theory and linear relaxation.
The resulting relaxed dual problem is solved by considering every possible combination of bounds.
Iterating between the primal and the relaxed dual problem yields a finite $\epsilon$-convergence to the global optimum.

Even though the objective function given in~\eqref{mainoptfunc} is a biconvex function, it is easy to verify that constraint~\eqref{eqLambdaConstraint} is not convex when $\mathbf{n}$ are optimized for fixed $Q$ and $\mathbf{L}$. Therefore, Problem~\ref{prb:mainprb} does not satisfy $\text{Conditions}(\text{A})$ provided in~\cite{FloVis1990}, which points to the reason why our problem cannot be solved by applying GOA.
Further, using GOA in order to calculate a bound of our problem, would require the discarding of constraint~\eqref{eqLambdaConstraint}.
As we show in the next section (Section~\ref{secSpeedingUp}) discarding constraint~\eqref{eqLambdaConstraint} allows us to reach an analytical solution for the performance of our setting.

We need to note here that a setting where constraint~\eqref{eqLambdaConstraint} is always satisfied can be interpreted as one with a very high number of users, or more accurately a very high ratio $\frac{K}{\Lambda}$, and very high number of transmitters $K_{T}$.
In such a setting, as it will also become evident from the simulations (Section~\ref{sec:numev}), the achieved delay and the upper bound performance are becoming narrowly smaller.
}

\section{Performance analysis}\label{secSpeedingUp}

In this section we provide a bound on the expected achieved delivery time, and further prove that \textit{any} sub-library segmentation, as described by our main problem (Problem~\ref{prb:mainprb}), would yield a decreased expected delivery time compared to the uniform popularity case.

The bound is achieved by utilizing the outcome of Lemma~\ref{lemmaDecreasingOverQ}, describing the monotonicity of the objective function over variable $Q$, as well as expression~\eqref{eqNewObjectiveFunction}, obtained in the following lemma, describing the form of the objective function optimized with respect to $\mathbf{L}$.

\begin{lemma}\label{lemmaLagrange2}
	The optimal allocation of the cache-redundancy vector $\mathbf{L}$ for each sub-library, under the assumption that constraint \eqref{eqLambdaConstraint} is satisfied away from the boundaries, results in the objective function
	\begin{equation}\label{eqNewObjectiveFunction}
		T(Q, \mathbf{n}) = n_{1} + \frac{K(1-\gamma)}{1+\Lambda\gamma} \frac{\left( \sum_{q=2}^{Q} \sqrt{ \pi_{q}( n_{q}\!-\!n_{q-1}) }\right)^2}{LN-n_1}.
	\end{equation}
\end{lemma}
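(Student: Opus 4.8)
The plan is to specialize the general optimal-$\mathbf{L}$ formula already obtained in Lemma~\ref{theoremLagrange1} to the regime in which none of the box constraints in~\eqref{eqLambdaConstraint} is active. In the notation of Lemma~\ref{theoremLagrange1}, "constraint~\eqref{eqLambdaConstraint} satisfied away from the boundaries" means precisely that the index sets $\phi$ (those $q$ with $L_q=1$) and $\psi$ (those $q$ with $L_q=U_q$) are empty, so that $\chi=\{2,\dots,Q\}$ and sub-library $\mathcal{B}_1$ alone keeps $L_1=1$. First I would substitute $\phi=\psi=\emptyset$ into~\eqref{eqGeneralL}, which collapses $\Phi_S=\Psi_S=0$ and leaves, for every $q\in\{2,\dots,Q\}$,
\begin{equation*}
L_q=\sqrt{\frac{\pi_q}{n_q-n_{q-1}}}\cdot\frac{LN-n_1}{\sum_{r=2}^{Q}\sqrt{\pi_r(n_r-n_{r-1})}}.
\end{equation*}
This is exactly the simplified form already flagged in the Remark following Lemma~\ref{lemmaRelaxationL}, so the first half of the argument is essentially a bookkeeping check that the three-case formula~\eqref{eqGeneralL} degenerates correctly.

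Next I would plug these $L_q$ into the reduced objective~\eqref{eqDeliveryTime2}, namely $T=n_1+\sum_{q=2}^{Q}\overline{K}_q(1-\gamma)/\big(L_q(1+\Lambda\gamma)\big)$ with $\overline{K}_q=K\pi_q$. For each term,
\begin{equation*}
\frac{K\pi_q(1-\gamma)}{L_q(1+\Lambda\gamma)}=\frac{K(1-\gamma)}{1+\Lambda\gamma}\cdot\pi_q\cdot\sqrt{\frac{n_q-n_{q-1}}{\pi_q}}\cdot\frac{\sum_{r=2}^{Q}\sqrt{\pi_r(n_r-n_{r-1})}}{LN-n_1}=\frac{K(1-\gamma)}{1+\Lambda\gamma}\cdot\frac{\sqrt{\pi_q(n_q-n_{q-1})}\,\sum_{r=2}^{Q}\sqrt{\pi_r(n_r-n_{r-1})}}{LN-n_1}.
\end{equation*}
Summing over $q\in\{2,\dots,Q\}$ pulls out a common factor $\sum_{r=2}^{Q}\sqrt{\pi_r(n_r-n_{r-1})}$ and produces the square $\big(\sum_{q=2}^{Q}\sqrt{\pi_q(n_q-n_{q-1})}\big)^2$, yielding exactly~\eqref{eqNewObjectiveFunction}. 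Alternatively, and perhaps more cleanly, I could read this off directly from Theorem~\ref{theoDelay}: setting $\phi=\psi=\emptyset$ (hence $|\psi|=0$, $\Phi_S=\Psi_S=0$, and the $\sum_{q\in\phi}\pi_q$ term vanishing) in~\eqref{eqDelayOptL} immediately gives~\eqref{eqNewObjectiveFunction}, so the lemma is an immediate corollary of results already proved.

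The only genuine content — and the place I would be most careful — is justifying that "satisfied away from the boundaries'' legitimately forces $\phi=\psi=\emptyset$, i.e.\ that the unconstrained KKT stationary point is the actual optimum in this regime. Since the objective is convex in $\mathbf{L}$ for fixed $Q,\mathbf{n}$ (the lemma preceding~\eqref{eqLagrangian}) and the feasible set is a box intersected with the affine redundancy constraint~\eqref{EQnConstraint}, the KKT conditions are necessary and sufficient; when the interior stationary point computed above already lies in $[1,U_q]^{Q-1}$, complementary slackness makes all $\mu_q=\nu_q=0$, so that point is optimal and $\phi=\psi=\emptyset$ by definition. I would state this as the hypothesis of the lemma and note that it is the operative meaning of the phrase in the statement; everything downstream is then the substitution above. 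I do not expect any real obstacle here, only the need to be explicit that the lemma is conditional on this non-activeness assumption, which is exactly how Section~\ref{secSpeedingUp} will use it (and which, as the paper notes, corresponds to the large-$K/\Lambda$, large-$K_T$ regime).
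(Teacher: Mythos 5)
Your proposal is correct and follows essentially the same route as the paper: the paper's own proof is exactly the substitution of the $\phi=\psi=\emptyset$ case of~\eqref{eqGeneralL} into~\eqref{eqDeliveryTime2}, which you carry out explicitly (and also correctly identify as readable off Theorem~\ref{theoDelay}). Your added care in arguing that the non-active-boundary hypothesis forces $\phi=\psi=\emptyset$ via convexity and complementary slackness is a sound elaboration of what the paper leaves implicit, not a different approach.
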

\begin{proof}
	Inserting the optimal cache-allocation calculated in \eqref{eqGeneralL} for $\phi = \psi = \emptyset$ into~\eqref{eqDeliveryTime2} yields the result.
\end{proof}

The main idea behind the performance bound is to utilize the monotonicity of the objective function with respect to $Q$, in the absence of constraint~\eqref{eqLambdaConstraint}, which leads to the conclusion that the expected delay is minimized when $Q=N$.

\begin{lemma}
	The minimum expected delivery time $\mathbb{E}\{ T^{\star} \}$ under the assumption of file popularity following a Zipf distribution with parameter $\alpha$ is lower bounded by
\begin{equation}
	\mathbb{E}\{ T^{\star} \}\ge \frac{K(1-\gamma)}{L N (1+\Lambda\gamma)} \frac{ \left( \sum_{q=1}^{N} {q^{-\alpha/2} } \right)^2 }{ \sum_{q=1}^{N}  {q^{-\alpha} } }
\end{equation}

Consequently, the maximum multiplicative ratio $G_{\max}$ that can be achieved by the optimal expected delay $T^{\star}$ compared to the delay of the uniform popularity case is bounded as
\begin{equation}\label{eqBoostTheoretical}
	G_{\max} \le  N  \frac{ \sum_{q=1}^{N} q^{-\alpha}  }{ \left( \sum_{q=1}^{N} q^{-\alpha/2}  \right)^2 }.
\end{equation}
\end{lemma}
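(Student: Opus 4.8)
The plan is to reach the lower bound through a chain of relaxations, each of which can only decrease the optimum, then to evaluate the relaxed problem in closed form using the Zipf probabilities, and finally to read off the bound on $G_{\max}$ by dividing by the uniform-popularity delay \eqref{eq:DelayUniform}. First I would drop constraint \eqref{eqLambdaConstraint}, the per-sub-library DoF cap $L_q\le U_q$; enlarging the feasible set can only lower the optimal value, so the optimum of the relaxed problem is a valid lower bound on $\mathbb{E}\{T^\star\}$. By Lemma~\ref{lemmaLagrange2}, once the relaxed problem is optimized over $\mathbf L$ its objective is exactly \eqref{eqNewObjectiveFunction}, namely $n_1+\tfrac{K(1-\gamma)}{1+\Lambda\gamma}\,\big(\sum_{q=2}^{Q}\sqrt{\pi_q(n_q-n_{q-1})}\big)^2/(LN-n_1)$.

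Next, by Lemma~\ref{lemmaDecreasingOverQ} this objective, further optimized over $\mathbf n$, is monotonically decreasing in $Q$ over $[N]$, so its minimum is attained at $Q=N$; by the consecutive-index structure of Lemma~\ref{lemmaConsecutive} the segmentation at $Q=N$ is essentially the singleton one, so $\pi_q=p_q$, $n_q-n_{q-1}=1$, and the objective reduces to $n_1+\tfrac{K(1-\gamma)}{1+\Lambda\gamma}\,\big(\sum_{q=2}^{N}\sqrt{p_q}\big)^2/(LN-n_1)$. Discarding the non-negative broadcast term $n_1$, bounding $LN-n_1\le LN$, and absorbing the broadcast files into the sum (exact when $\mathcal{B}_1=\emptyset$; see below) leaves $\tfrac{K(1-\gamma)}{LN(1+\Lambda\gamma)}\big(\sum_{q=1}^{N}\sqrt{p_q}\big)^2$, and substituting $p_q=q^{-\alpha}/\sum_{k=1}^{N}k^{-\alpha}$ from \eqref{eq:zipf} turns $\big(\sum_{q=1}^{N}\sqrt{p_q}\big)^2$ into $\big(\sum_{q=1}^{N}q^{-\alpha/2}\big)^2/\sum_{k=1}^{N}k^{-\alpha}$, which is exactly the claimed right-hand side. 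If one prefers to avoid invoking the monotonicity over $Q$, the same estimate follows for a generic partition by replacing each $\sqrt{\pi_q(n_q-n_{q-1})}$ with the smaller quantity $\sum_{k\in\mathcal{B}_q}\sqrt{p_k}$ via Cauchy--Schwarz, since $\big(\sum_{k\in\mathcal{B}_q}\sqrt{p_k}\big)^2\le|\mathcal{B}_q|\,\pi_q$, giving $\sum_{q\ge 2}\sqrt{\pi_q(n_q-n_{q-1})}\ge\sum_{k>n_1}\sqrt{p_k}$.

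For the multiplicative gain, the uniform-popularity delay is $T=\tfrac{K(1-\gamma)}{L(1+\Lambda\gamma)}$ by \eqref{eq:DelayUniform}. Dividing it by the lower bound just derived cancels the common factor $\tfrac{K(1-\gamma)}{1+\Lambda\gamma}$ and leaves $G_{\max}=T/\mathbb{E}\{T^\star\}\le N\,\sum_{q=1}^{N}q^{-\alpha}/\big(\sum_{q=1}^{N}q^{-\alpha/2}\big)^2$, which is precisely \eqref{eqBoostTheoretical}.

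The step I expect to be the main obstacle is discarding the broadcast sub-library term $n_1$ (together with the accompanying $-n_1$ in the denominator) while keeping the clean closed form: one must check that $n_1$ dominates $\tfrac{K(1-\gamma)}{LN(1+\Lambda\gamma)}\big[\big(\sum_{q=1}^{N}\sqrt{p_q}\big)^2-\big(\sum_{k>n_1}\sqrt{p_k}\big)^2\big]$. This is immediate when the optimal $\mathcal{B}_1$ is empty, which is exactly the many-transmitter / large $K/\Lambda$ regime in which \eqref{eqLambdaConstraint} is inactive and in which, as the numerics of Section~\ref{sec:numev} confirm, the bound is essentially tight; more generally a short Cauchy--Schwarz estimate (using that the $\sqrt{p_k}$ are decreasing and $\sum_{k}\sqrt{p_k}\le\sqrt{N}$) controls this difference in that regime.
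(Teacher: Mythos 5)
Your proof follows the paper's own route essentially verbatim: relax \eqref{eqLambdaConstraint}, use the $\mathbf{L}$-optimized closed form of Lemma~\ref{lemmaLagrange2} together with the $Q$-monotonicity of Lemma~\ref{lemmaDecreasingOverQ} to evaluate at $Q=N$ singleton sub-libraries, substitute the Zipf probabilities from \eqref{eq:zipf}, and divide the uniform-popularity delay \eqref{eq:DelayUniform} by the resulting expression to get \eqref{eqBoostTheoretical}. The one step you flag as the main obstacle---a possibly nonempty broadcast sub-library $\mathcal{B}_{1}$ at the relaxed optimum---is not treated in the paper either: there the constraint $L_{q}\ge 1$ is dropped as well and \eqref{eqNewObjectiveFunction} is evaluated directly at $\mathbf{n}=[N]$ with $\mathcal{B}_{1}=\emptyset$, so your attempt is, if anything, more explicit on that point than the paper's own proof.
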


\begin{proof}

In order to bound the optimal expected delivery time we remove constraint~\eqref{eqLambdaConstraint} and constraint $L_{q}\ge 1$.

Then, it follows from Lemma~\ref{lemmaDecreasingOverQ} that the minimum value of~\eqref{mainoptfunc} is achieved for $Q^{\star}=N$, which
implies $\mathbf{n} = [N]$, i.e. each sub-library is comprised of a single file. By incorporating the result of \eqref{eqNewObjectiveFunction} we can write the expectation of the objective function for $Q=N$ and $\mathbf{n} = [N]$ as
\begin{equation}\label{eqMaxPerformance}
	\mathbb{E}\{ T( N, [N]) \} = \frac{K(1-\gamma)}{L N (1+\Lambda\gamma)} \left( \sum_{q=1}^{N} \sqrt{p_{q}}\right)^2.
\end{equation}

Using that the fact that the file popularity follows the Zipf distribution with parameter $\alpha$, we can rewrite \eqref{eqMaxPerformance} as
\begin{equation}
	\mathbb{E}\{ T( N, [N])  \}= \frac{K(1-\gamma)}{L N (1+\Lambda\gamma)} \frac{ \left( \sum_{q=1}^{N} \frac{1}{q^{\alpha/2} } \right)^2 }{ \sum_{q=1}^{N} \frac{1}{q^{\alpha} } }
\end{equation}
The ratio between the above result and the uniform-popularity case, where the delivery time is $T_{\text{u}} = \frac{K(1-\gamma)}{L (1+\Lambda\gamma)}$, yields the result of \eqref{eqBoostTheoretical}.
\end{proof}

\begin{figure}
\centering
\includegraphics[width=0.8\columnwidth]{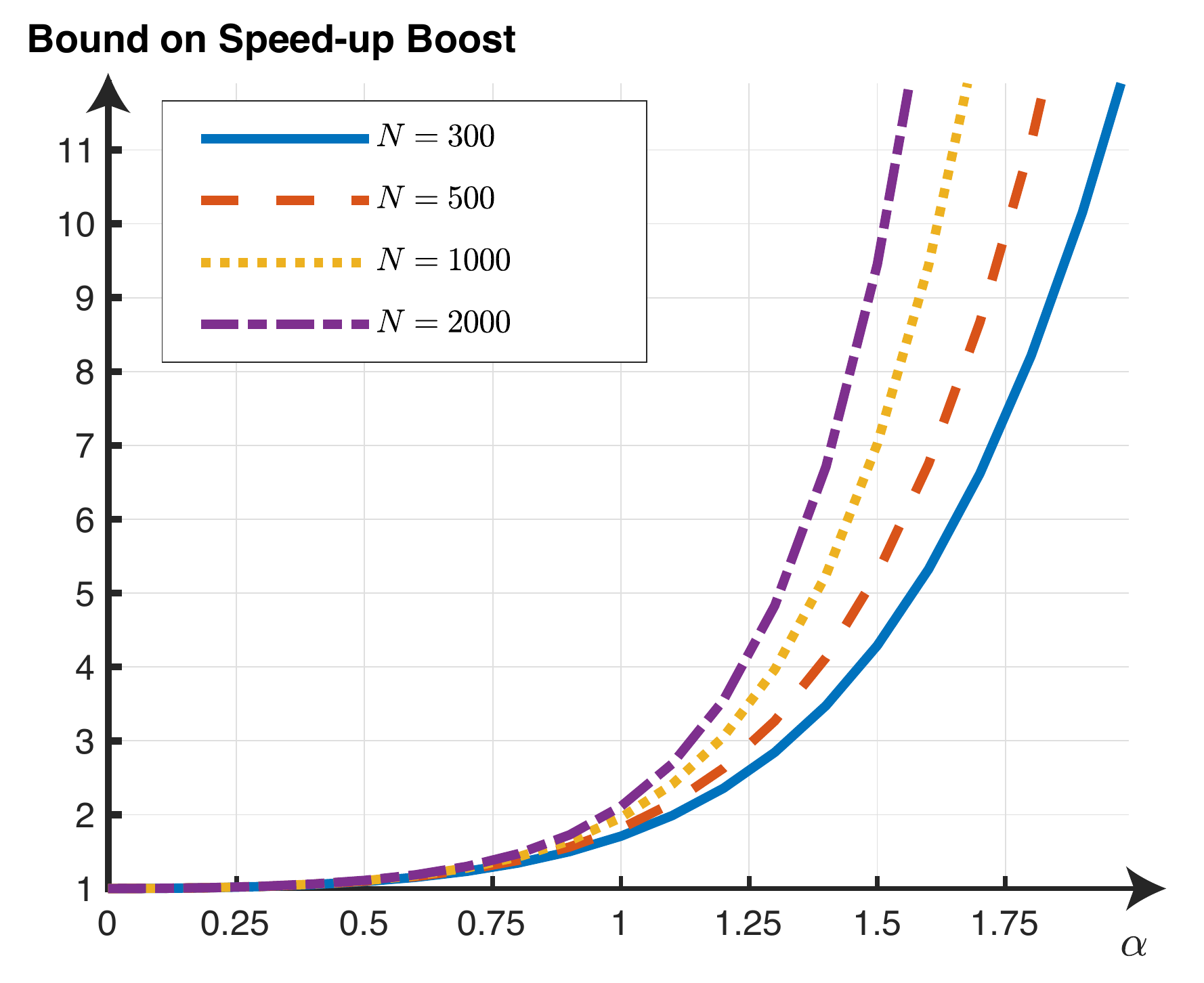}
\caption{The bound on the multiplicative boost,~\eqref{eqBoostTheoretical}, as a function of $\alpha$.}
\label{figTheoreticalOptimal}
\end{figure}

\begin{figure}
\centering
\includegraphics[width=0.72\columnwidth]{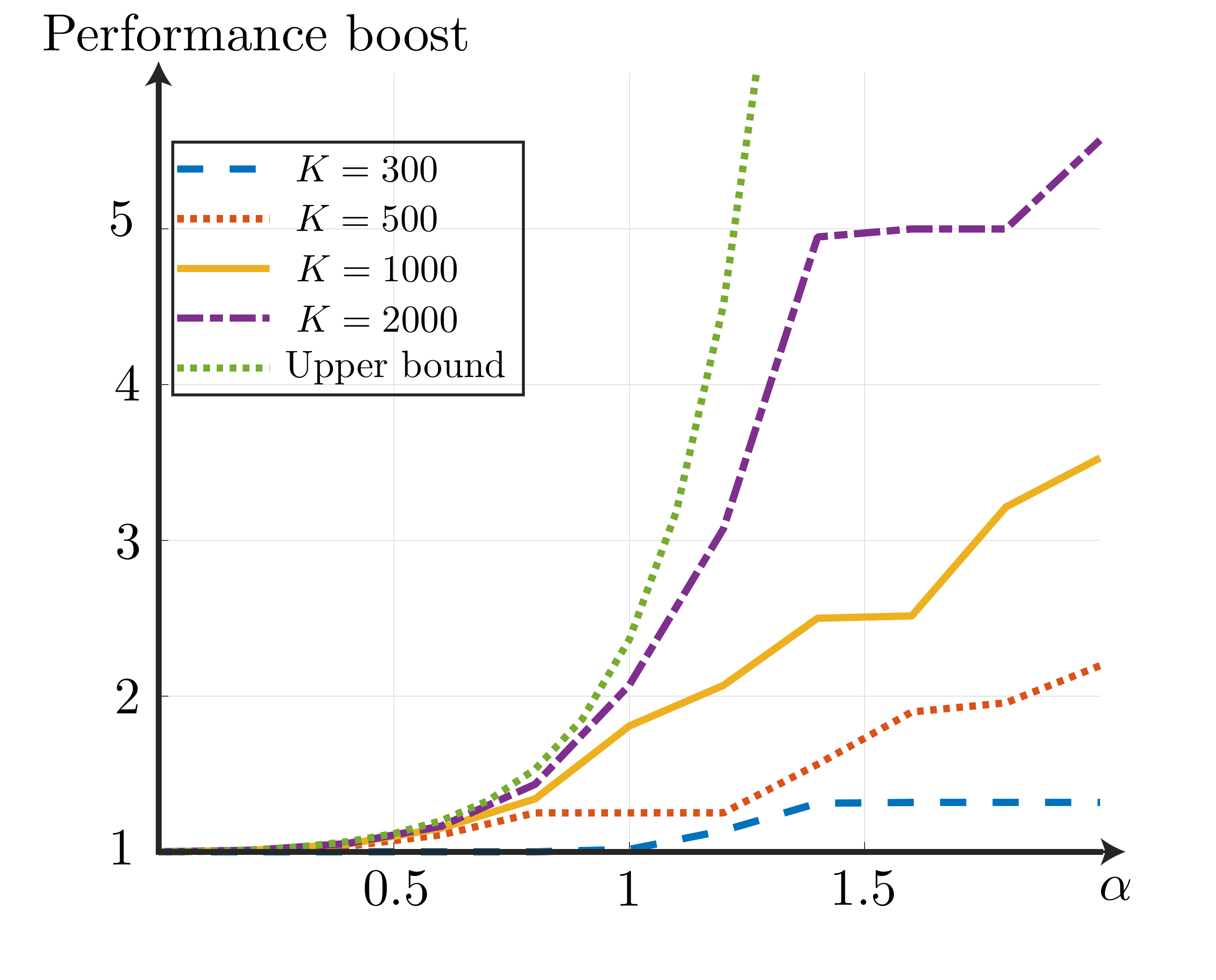}
\caption{The multiplicative boost of the expected performance achieved by our algorithm compared to the setting with uniform file popularity. The comparison is displayed here as a function of the Zipf parameter and for various $K$. The number of files across the examples is $N=6000$.}
\label{fig:res1}
\end{figure}
As we can see, the gain achieved is not depend on the number of users $K$. This is due to the lack of constraint \eqref{eqLambdaConstraint} which would otherwise enforce each cache-allocation variable $L_{q}\le \min\{  \frac{K_{q}}{\Lambda}, K_{T}\} $.
In Figure~\ref{fig:res1} we compare the theoretical result from \eqref{eqBoostTheoretical} with the numerical results of Section~\ref{sec:numev}.
It is interesting to note that as the number of users increases, the gains achieved in the simulations are moving closer to the theoretical bound. This can be attributed to the fact that as the number of users increases the cache-allocation variables $L_{q}$ are allowed to increase, in conjunction with constraint \eqref{eqLambdaConstraint}, hence the $L_{q}$ variables move closer to their optimal values.

We continue with a result that shows that any library segmentation, as long as constraint~\eqref{eqLambdaConstraint} is satisfied, would lead to a lower or equal delay compared to the uniform popularity case.

\begin{corollary}\label{corImprovement}
	Any library segmentation $\mathbf{n}$ that respects constraint~\eqref{eqLambdaConstraint} and is optimized over vector $\mathbf{L}$ improves upon the delivery time of the uniform popularity case i.e.
	\begin{align}
		\mathbb{E} \big\{ T(Q,  \mathbf{n}, \mathbf{L})  \big\} &< \frac{K (1-\gamma)}{L (1+ \Lambda\gamma)},\ \ 
\\
		L_{q}\! \le\! U_{q},\  \forall q\in[Q] \ \ \  \
		 & \mathbf{n}\in [N]^{Q} : n_{i}< n_{j}, \ i <j .  \nonumber
	\end{align}\end{corollary}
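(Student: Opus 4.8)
The plan is to bound the $\mathbf{L}$-optimal delay by the delay of one explicitly chosen feasible allocation — the one that copies the uniform-popularity scheme onto the given segmentation — and then to recover strictness from the fact that this allocation is itself not the $\mathbf{L}$-minimiser. The first thing to pin down is what the hypothesis buys: a segmentation $\mathbf{n}$ ``respecting'' \eqref{eqLambdaConstraint} is one for which the allocation $\mathbf{L}^{0}$ with $L^{0}_{1}=1$ and $L^{0}_{q}=L$ for $q\ge2$ is feasible; indeed $L^{0}_{1}=1\le U_{1}$ and $L^{0}_{q}=L\le U_{q}$ for $q\ge2$ (this is precisely what ``respecting \eqref{eqLambdaConstraint}'' amounts to), while by \eqref{EQnConstraint}, $n_{1}+L(N-n_{1})=LN-(L-1)n_{1}\le LN$ since $L\ge1$. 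As $\mathbf{L}^{\star}$ is optimal while $\mathbf{L}^{0}$ is feasible, \eqref{eqDeliveryTime2} gives
\[
  \mathbb{E}\bigl\{T(Q,\mathbf{n},\mathbf{L}^{\star})\bigr\}\ \le\ n_{1}+\sum_{q=2}^{Q}\frac{K\pi_{q}(1-\gamma)}{L(1+\Lambda\gamma)}\ =\ n_{1}+\frac{K(1-\gamma)}{L(1+\Lambda\gamma)}\,(1-\pi_{1}),
\]
and — importantly — this estimate needs no case split on whether $\mathbf{L}^{\star}$ is interior, so the boundary terms $\phi,\psi$ of Theorem~\ref{theoDelay} never have to be touched.

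Writing $T_{\text{u}}\triangleq\frac{K(1-\gamma)}{L(1+\Lambda\gamma)}$ for the uniform-popularity delay, the right-hand side above equals $T_{\text{u}}-(\pi_{1}T_{\text{u}}-n_{1})$, so it remains to establish $\pi_{1}T_{\text{u}}\ge n_{1}$, i.e.\ that the \emph{average} popularity of the files in $\mathcal{B}_{1}$ is at least $1/T_{\text{u}}$. Since $\mathcal{B}_{1}=\{1,\dots,n_{1}\}$ is the block of the $n_{1}$ most popular files (Lemma~\ref{lemmaConsecutive}) and the Zipf weights \eqref{eq:zipf} are strictly decreasing, this is exactly the condition under which $\mathcal{B}_{1}$ should be non-trivial at all: broadcasting file $n$ costs one time-unit whereas coded-caching it costs at least $p_{n}T_{\text{u}}$, so no file with $p_{n}<1/T_{\text{u}}$ ever belongs in $\mathcal{B}_{1}$. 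Under this mild restriction on $\mathbf{n}$ we obtain $\mathbb{E}\{T\}\le T_{\text{u}}$.

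Upgrading ``$\le$'' to ``$<$'' splits into two regimes. If $\pi_{1}T_{\text{u}}>n_{1}$ — a genuinely popular, non-empty $\mathcal{B}_{1}$ — the inequality is already strict. If $n_{1}=0$, strictness follows because $\mathbf{L}^{0}\equiv L$ is not the $\mathbf{L}$-minimiser: by \eqref{eqGeneralL} an interior optimiser satisfies $L_{q}\propto\sqrt{\pi_{q}/(n_{q}-n_{q-1})}$, so a constant allocation is optimal only when $\pi_{q}/(n_{q}-n_{q-1})$ — the average per-file popularity of sub-library $q$ — is the same for every coded sub-library; by Corollary~\ref{corMoreNeedsMore} and strict monotonicity of the Zipf law this can occur only with a single coded sub-library ($Q=2$, $\mathcal{B}_{1}=\emptyset$), which is itself the uniform-popularity configuration. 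For every other segmentation $\mathbb{E}\{T(Q,\mathbf{n},\mathbf{L}^{\star})\}<\mathbb{E}\{T(Q,\mathbf{n},\mathbf{L}^{0})\}\le T_{\text{u}}$.

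The step I expect to be the actual obstacle is making ``$\pi_{1}T_{\text{u}}\ge n_{1}$'' rigorous without simply folding it into the hypotheses: to cover literally every $\mathbf{n}$ with strictly increasing entries and $L_{q}\le U_{q}$, one must argue that a segmentation violating this bound is strictly improved by relabelling its ``non-popular-enough'' $\mathcal{B}_{1}$-files into a coded sub-library, which entails tracking simultaneously the change in the $n_{1}$-term and in the $\mathbf{L}$-re-optimised coded part (enlarging a coded sub-library shifts both its $\pi_{q}$ and its optimal redundancy), and this is where the bookkeeping gets delicate. Everything else — feasibility of $\mathbf{L}^{0}$, the inequality chain, and the strictness via KKT stationarity — is routine, and, as noted, the feasible-point argument sidesteps the boundary cases of Theorem~\ref{theoDelay} entirely.
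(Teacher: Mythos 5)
Your feasible-point strategy does not close, and the hole is exactly the one you flag yourself: the inequality $\pi_{1}T_{\text{u}}\ge n_{1}$ is neither proved nor, in general, true for the segmentations the corollary quantifies over. The statement asserts the improvement for \emph{any} increasing $\mathbf{n}\in[N]^{Q}$ satisfying the box constraints, including segmentations whose first block is large and not ``popular enough''; for those, your upper bound $n_{1}+T_{\text{u}}(1-\pi_{1})$ exceeds $T_{\text{u}}$ and the chain of inequalities collapses. Your proposed repair (relabelling under-popular $\mathcal{B}_{1}$-files into a coded sub-library and re-optimizing $\mathbf{L}$) would prove something about the \emph{optimal} $\mathbf{n}$, not about an arbitrary one, so it changes the statement rather than completing the proof. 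A second, smaller issue is feasibility of your comparison point $\mathbf{L}^{0}$: the hypothesis ``$L_{q}\le U_{q}$'' constrains the allocation actually used, and does not by itself guarantee $L\le U_{q}=\min\{K_{T},K\pi_{q}/\Lambda\}$ for every coded sub-library (a block with small $\pi_{q}$ can have $U_{q}<L$), so even the first inequality $\mathbb{E}\{T(Q,\mathbf{n},\mathbf{L}^{\star})\}\le\mathbb{E}\{T(Q,\mathbf{n},\mathbf{L}^{0})\}$ needs an argument you have not supplied.

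The paper takes a different and shorter route that avoids both problems: it substitutes the closed-form water-filling allocation of Lemma~\ref{lemmaLagrange2} into the delay, obtaining
\[
\mathbb{E}\{T(Q,\mathbf{n})\}=\frac{K(1-\gamma)}{L(1+\Lambda\gamma)}\cdot\frac{\bigl(\sum_{q}\sqrt{\pi_{q}(n_{q}-n_{q-1})}\bigr)^{2}}{N},
\]
and then applies Cauchy--Schwarz to the vectors $(\sqrt{\pi_{q}})_{q}$ and $(\sqrt{n_{q}-n_{q-1}})_{q}$, using $\sum_{q}\pi_{q}=1$ and $\sum_{q}(n_{q}-n_{q-1})=N$ to land exactly on $T_{\text{u}}$. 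Strictness comes from the equality case: equality forces $\pi_{q}=\lambda^{2}(n_{q}-n_{q-1})$ for all $q$ with $\lambda^{2}=1/N$, i.e.\ every block must have average per-file popularity $1/N$, which is impossible under a Zipf law with $\alpha>0$ and $Q>1$. This is global (no per-block comparison against a broadcast cost $n_{1}$ ever appears), which is precisely why the obstacle you identified does not arise there. Your equality-case discussion for the $n_{1}=0$ branch is essentially this same argument, so the fix is to drop the broadcast treatment of $\mathcal{B}_{1}$ and the feasible-point detour, and run the Cauchy--Schwarz bound on the $\mathbf{L}$-optimized expression directly.
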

\begin{proof}
We consider some arbitrary library segmentation $\mathbf{n}$ which respects constraint~\eqref{eqLambdaConstraint}, and the objective function in~\eqref{eqNewObjectiveFunction}, i.e. after optimized over vector $\mathbf{L}$.
	\begin{align}\label{eqCS1}
	\mathbb{E} \{ T(Q, \mathbf{n}) \} &= \frac{K(1-\gamma)}{L (1+\Lambda\gamma)} \frac{\left( \sum_{q=1}^{Q} \sqrt{ \pi_{q}( n_{q}\!-\!n_{q-1}) }\right)^2}{N}\\ \label{eqCS2}
			&\!\le\! \frac{K(1\!-\!\gamma)}{L (1\!+\!\Lambda\gamma)} \frac{   \sum_{q=1}^{Q} (\sqrt{\pi_{q}})^2    \sum_{q=1}^{Q}( \sqrt{ n_{q}\!-\!n_{q\!-\!1}})^2}{N}\\
			&= \frac{K(1-\gamma)}{L  (1\!+\!\Lambda\gamma)}.
\end{align}
The transition from \eqref{eqCS1} to \eqref{eqCS2} makes use of the Cauchy-Schwartz inequality, where the first summation in \eqref{eqCS2} is equal to $1$, while the second summation is equal to $N$.
Thus, \textit{any} library segmentation $\mathbf{n}$, under the optimal cache-redundancy allocation, is upper bounded by the delivery time of the uniform popularity setting.

	Further, we can deduce the choices of $\mathbf{n}$ that do not improve the delivery time, compared to the uniform case. Specifically, we can view $\mathbb{E}\{ T(Q,\mathbf{n}) \}$ as the dot product of vectors $ \boldsymbol\pi_{1/2}\triangleq (\sqrt{\pi_1}, ..., \sqrt{\pi_{Q}})$ and $ \mathbf{n}_{1/2}\triangleq (\sqrt{n_1}, \sqrt{n_2-n_1}..., \sqrt{n_{Q}-n_{Q-1}})$.
	
	In order for the equality to hold in the Cauchy-Schwartz inequality, since neither $\boldsymbol{\pi}_{1/2}$ nor $\mathbf{n}_{1/2}$ can be the all zero vector, it is required that the two vectors are linearly dependent, i.e. $ {\boldsymbol\pi}_{1/2} = \lambda {\mathbf n}_{1/2}$, $\lambda\in\mathbb{R}$, \cite{apostol1974mathematical}. In other words,
	\begin{align} \label{eqCS3}
		\lambda^2 (n_{q} - n_{q-1}) &= \pi_q, \ \ \ \forall q \in [Q].
	\end{align}
Summing \eqref{eqCS3} over all $q$ yields $\lambda^2 \cdot N= 1$. Thus, for $\pi_{1}$ it must hold that
\begin{align*}
	\pi_{1} =  \frac{n_{1}}{N}
\end{align*}
which cannot be satisfied regardless of the sub-library segmentation when $\alpha>0$ and $Q>1$. Hence, any choice of $\mathbf{n}$ which satisfies constraint \eqref{eqLambdaConstraint}, would lead to an improved delivery time compared to the uniform-popularity case.
\end{proof}

\begin{remark}\label{remarkNonOptimalLibraries}
	Based on the result of Corollary~\ref{corImprovement}, we can see that this improvement would not necessarily hold in the general library segmentation considered in Problem~\ref{eqOriginalProblem}. Specifically, we can easily see that there are many library segmentations that satisfy ${ \boldsymbol\pi}_{1/2} =\frac{1}{N}\cdot {\mathbf{n}}_{1/2}$.
\end{remark}

\section{Numerical evaluation}\label{sec:numev}

\begin{table*}[t!]
\begin{centering}
\begin{tabular}{|c|c|c|c|c|c|c|c|c|c|}
   \hline
   \quad & \multicolumn{2}{|c|}{$K = 300$} & \multicolumn{2}{|c|}{$K = 500$} & \multicolumn{2}{|c|}{$K = 1000$} & \multicolumn{2}{|c|}{$K = 2000$}\\
  \hline
   $\alpha$ & $\mathbf{n^*}$ & $\mathbf{\bar{L}}$ & $\mathbf{n^*}$ & $\mathbf{\bar{L}}$ & $\mathbf{n^*}$ & $\mathbf{\bar{L}}$ & $\mathbf{n^*}$ & $\mathbf{\bar{L}}$\\
  \hline
$0.2$ & $[0]$ & $[5.0000]$ & $[0, 2174]$  & $[5.5405, 4.6929]$  & $[0, 1004, 2591]$  & $[5.9481, 5.1025, 4.6726]$  & $[0, 466, 2138]$  & $[6.4407, 5.2914, 4.6998]$\\
  \hline
$0.4$ & $[0]$ & $[5.0000]$ & $[0, 1923]$  & $[6.2933, 4.3900]$  & $[0, 817, 2530]$  & $[7.4944,5.2183, 4.3049]$  & $[0, 353, 1907]$  & $[8.9654, 5.7116, 4.3878]$\\
  \hline
$0.6$ & $[0]$ & $[5.0000]$ & $[0, 1678]$  & $[7.3849, 4.0740]$  & $[0, 634, 2262]$  & $[9.8119, 5.4960, 3.9679]$  & $[0, 251, 1652]$  & $[13.1689, 6.3736, 4.0858]$\\
  \hline
   $0.8$ & $[0]$ & $[5.0000]$ & $[0, 1431]$  & $[8.8101, 3.6899]$  & $[0, 785]$  & $[22.6643, 2.3357]$  & $[0, 191, 1439]$  & $[20.4944, 7.1939, 3.7507]$\\
  \hline
   $1$ & $[1]$ & $[5.0007]$ & $[0, 1582]$  & $[10.7041, 1.7959]$  & $[0, 550]$  & $[13.7520, 4.1168]$  & $[0, 157, 1278]$  & $[30.3803, 8.1446, 3.4096]$\\
  \hline
$1.2$ & $[1]$ & $[5.0007]$ & $[0, 816]$  & $[11.3591, 1.1409]$  & $[0, 490]$  & $[18.2515, 3.8216]$  & $[0, 233]$  & $[41.1398, 3.6763]$\\
  \hline
$1.4$ & $[1]$ & $[5.0007]$ & $[3]$  & $[5.0020]$  & $[0, 400]$  & $[23.7579, 1.2417]$  & $[0, 212]$  & $[46.6118, 3.3188]$\\
  \hline
$1.6$ & $[1]$ & $[4.2058]$ & $[2]$  & $[5.0013]$  & $[5]$  & $[5.0033]$  & $[0, 98]$  & $[47.8672, 2.1325]$\\
  \hline
$1.8$ & $[1]$ & $[3.5129]$ & $[2]$  & $[3.9464]$  & $[4]$  & $[4.9572]$  & $[0, 15]$  & $[46.3237, 3.6761]$\\
  \hline
$2$ & $[1]$ & $[2.9401]$ & $[1]$  & $[4.9001]$  & $[3]$  & $[4.3115]$  & $[5]$  & $[5.0033]$\\
  \hline
  \noalign{\vskip 1mm}
\end{tabular}
\caption{Optimal sub-library boundaries $\mathbf{n}^{\star}$ and optimal antenna allocations $\mathbf{L}^{\star}$. The number of sub-libraries is given by $Q = 1+|\mathbf{n}^{\star}|$. When the first value of $\mathbf{n}^{\star}$ is $0$ it points to an empty $\mathcal{B}_{1}$ sub-library.}
\label{tab:tab1}
\end{centering}
\end{table*}

\begin{figure}
\centering
\includegraphics[width=0.95\columnwidth]{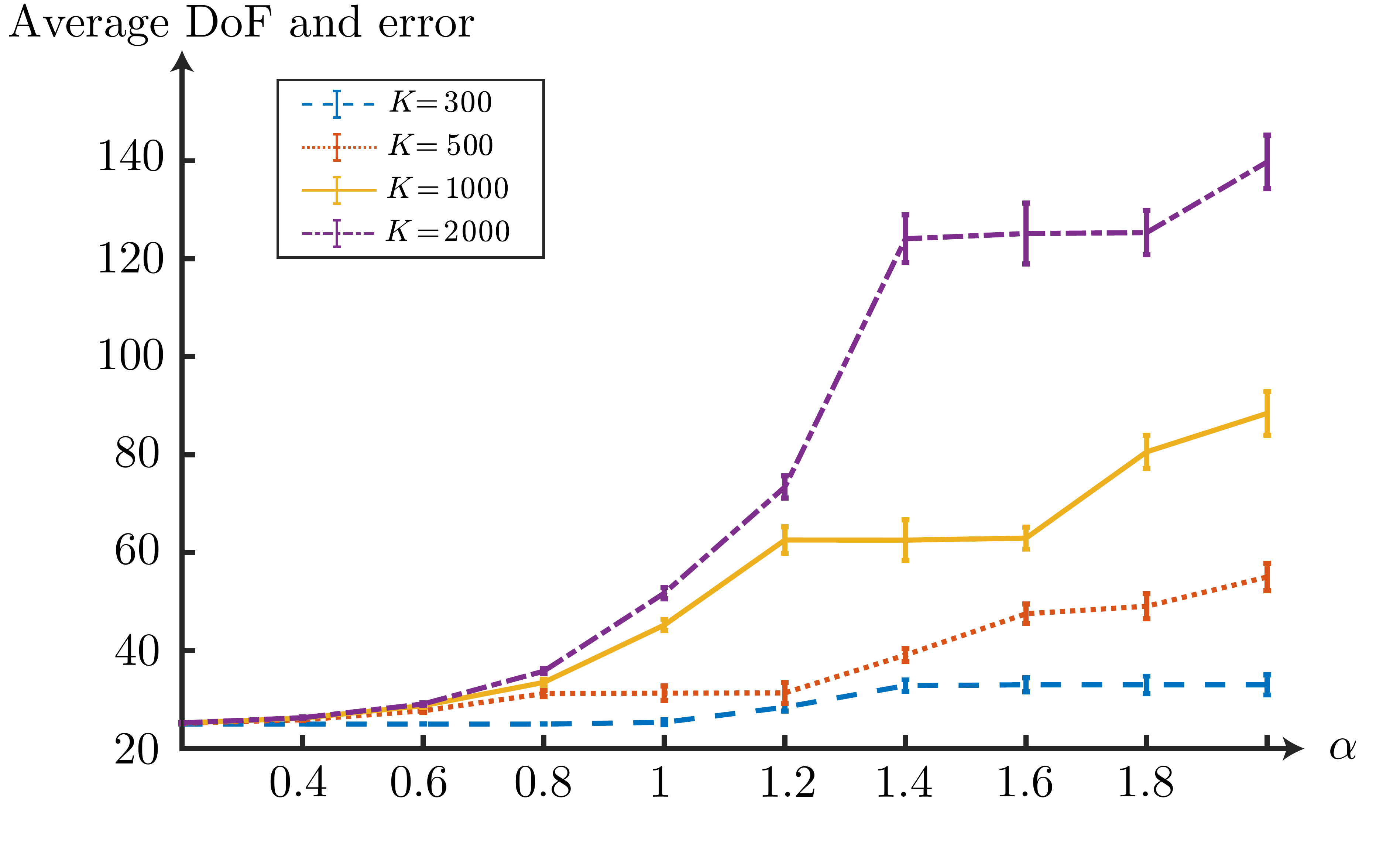}
\caption{
Expected DoF and standard deviation achieved under the placement dictated by the algorithm of Section~\ref{secSolutionOverview}. User preferences are drawn according to the Zipf distribution.
The deviation from the mean is  remains less than $1$ DoF for small values of $\alpha$, while for bigger values of $\alpha$ and number of users the deviation is approximately $5-10\%$ of the achieved DoF.}
\label{figErrors}
\end{figure}

To illustrate the performance of our proposed placement, we consider two scenarios that differ on the library sizes and the caching capabilities of the transmitters and the users.
The first scenario focuses on a library with TV series, comprised of many files, but each of relatively small size, thus allowing a higher percentage of the library to be stored at the receivers.
On the second scenario, we have a library of movies which, although has much fewer individual files, nevertheless each file has higher size.

\paragraph*{Scenario $1$}

We consider a typical, dense multiple transmitter setting \cite{7248710,6786390,poularakisDataOffloadingTNSM2016,shanmugamFemtocaching2013TransIT}, where a set of $K_{T} = 50$ single-antenna transmitters are connected to $K$ receivers.
The content library is comprised of $N=6000$ TV series episodes, such as typically found in the Netflix catalogue of European countries \cite{Batikas2015Film}.
The size of each such episode is assumed to be $100$MB, i.e. of standard definition quality, while its duration approximately $45$min.
Each transmitter and each receiver can store $10\%$ of the whole library, i.e. $\gamma_{T} = \gamma = \frac{1}{10}$ which amounts to $60$GB.
For a packet size of $1$KB the subpacketization is constrained to be $F \le 10^5$, thus the maximum number of different caches allowed is $\Lambda = 40$ ($\binom{40}{4} \approx 9\cdot 10^4$).

In Figure~\ref{fig:res1} we plot, for varying number of users $K\in \{300, 500, 1000, 2000\}$, the ratio of the delivery time of the non-uniform setting over the expected delivery time of our scheme as well as the upper bound calculated in Section~\ref{secSpeedingUp}, as a function of parameter $\alpha$.
We observe that for $\alpha= 0.8$ the delay reduction, compared to the uniform popularity case, ranges between $25\%$ (factor $1.3$) and $45\%$ (factor $1.8$) for $K=500$ and $K=2000$, respectively and further increases to a multiplicative factor of $2.8$ for $\alpha = 1.2$.
Another important point is that for $\alpha \le 1.2$ the proposed scheme remains close to the upper bound.

Further, in Figure~\ref{figErrors} we plot the average DoF performance as a function of $\alpha$ for all the values of $K$ of our example, as well as the deviation from the mean produced by $10^3$ simulations.
We note that for practical values of parameter $\alpha$ ($\alpha \le 1.2$), the DoF performance varies slightly from the mean value.

The optimal sub-library boundaries $\mathbf{n}^{\star}$ and the optimal cache-allocation values $\mathbf{L}^{\star}$ for each of the parameters of Scenario $1$ are displayed in Table~\ref{tab:tab1}.

\paragraph*{Scenario $2$}
	Let us now consider another network that aims to serve content from a library of $N=3000$ movies, typical of a Netflix catalogue \cite{Batikas2015Film}, each of size of $1$GB, of average duration $1.5$h, and of standard definition quality.
	User demands are satisfied by a set of $K_T = 20$ single-antenna transmitters.
	Due to the much higher per-file size, the normalized cache of a user is $\gamma = \frac{1}{50}$, while we consider that each transmitter's cache is $\gamma_T = \frac{1}{10}$. Hence, a user dedicates $60$GB for caching, while a transmitter dedicates $300$GB.
	Assuming, as before, that the minimum packet size is $1$KB, translates to a maximum supacketization of $F\le 10^6$ packets thus, the maximum number of different caches allowed is $\Lambda = 150$ ($\binom{150}{3} \approx 5,5\cdot 10^{5}$).
	
The performance boost, compared to the non-popularity case, is displayed in Figure~\ref{fig:res2} for varying number of users $K = \{ 500, 1000, 2000\}$.
It is interesting to note that as the number of users increase one can get close to the bound.

	\begin{figure}
\centering
\includegraphics[width=0.75\columnwidth]{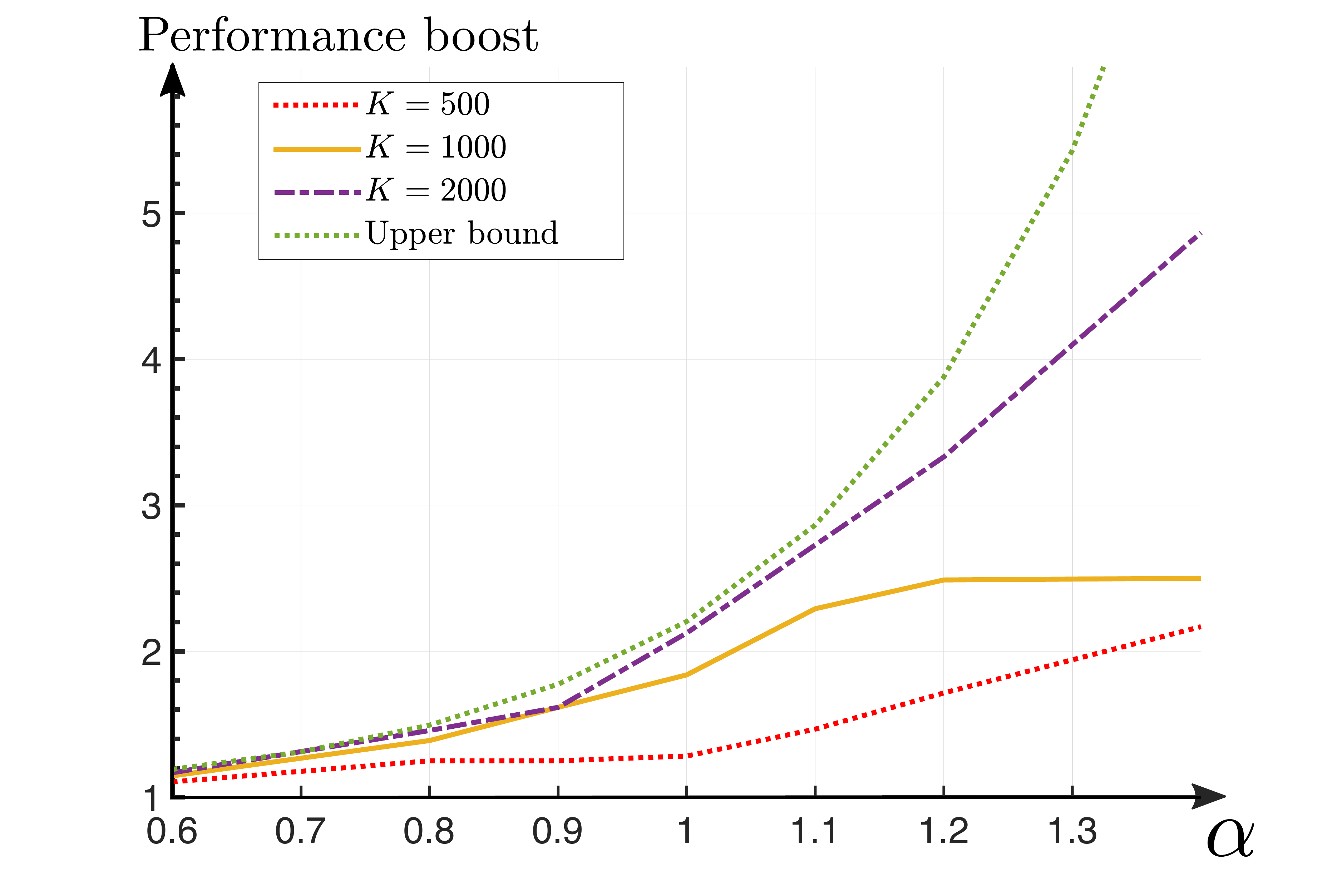}
\caption{Scenario $2$. The multiplicative boost of the expected performance achieved by our algorithm compared to the setting with uniform file popularity. The comparison is displayed here as a function of the Zipf parameter, for various numbers of users $K$. The library has $N=3000$ files.}
\label{fig:res2}
\end{figure}

\section{Final remarks and Conclusions}\label{sec:conc}
Our work showed for the first time how one can leverage file popularity in order to optimize the cached content at multiple transmitters and achieve multiplicative increase in the performance of coded caching systems. This performance increase can occur even when the file popularity is not very skewed, and it can occur in the subpacketization-constrained regime, where it is indeed needed the most. 
While most single antenna cache-aided systems exhibit success in increasing the ``usable'' part of a user's cache when exploiting file-popularity at the receiver side, we have showed here how multi-transmitter environments can provide multiplicative gains by becoming popularity-aware, while not affecting the structural symmetry in which coded caching thrives.

\bibliographystyle{ieeetr}

\appendices

\section{Transmitter-side caching policy\label{sec:Redundancy}}
To implement the cache-redundancy allocation of our algorithm, for any values of parameters $Q, \mathbf{n},\mathbf{L}$, we extend the approach of \cite{lampirisSubpacketizationJSAC} to account for multiple sub-libraries with different redundancies. We note that the objective is to place each file of sub-library $\mathcal{B}_{q}$ in exactly $L_{q}$ different transmitters. The proposed placement paired with the delivery process we described in the main part of the document can produce the DoF performance of \eqref{eq:qDelay}, as illustrated in \cite{lampirisSubpacketizationJSAC}.

The placement is done sequentially.  We start from the first sub-library and we consecutively cache the whole first file into the first $L_{1}$ transmitters, then the second file (of the first sub-library) into transmitters $L_{1}+1$ through $1+(2 L_{1}-1 \mod K_{T})$, and so on for the remaining files of $\mathcal{B}_{1}$. The selection of the transmitters is always done using the modulo operation, which means that when we place a file at the last transmitter, we continue the process with the first transmitter.

After storing each file from the first sub-library in a total of $L_{1}$ transmitters each, we proceed with the second sub-library. Continuing from the transmitter after the one last used, i.e. continuing from transmitter $1+(n_{1}\cdot L_{1}\mod K_{T})$, we again sequentially fill the caches, starting from the first file of the second sub-library, which we now store in $L_{2}$ consecutive transmitters, and so on. The process is repeated for each sub-library $\mathcal{B}_{q}$, using the corresponding $L_q$, starting every time from the transmitter after the one last used.

Overall, the above process stores each file of sub-library $\mathcal{B}_q$ in exactly $L_{q}$ distinct transmitters. Further, through this cyclic assignment of files into transmitters we can guarantee that the cache-size constraint is satisfied, i.e. that each transmitter stores exactly $\gamma_{T}N$ files.

\section{Proofs of Section~\ref{secSolutionOverview}}
\subsection{Proof of Corollary~\ref{corMoreNeedsMore}}\label{proofMoreNeedsMore}

The delay required to satisfy solely the demands corresponding to sub-libraries $q,r$ can be written, after normalization by $\frac{K(1-\gamma)}{1+\Lambda\gamma}$, as
	\begin{equation}\label{eqPartialDeliveryTime}
		T_{\text{p}}(L_{q},L_{r}) = \frac{\pi_{q}}{L_{q}} +  \frac{\pi_{r}}{L_{r}}.
	\end{equation}
Using an equal cache-allocation, $L_{q} = L_{r} = \tilde{L}$, yields
	\begin{equation}
		T_{\text{p}}(\tilde{L},\tilde{L}) = \frac{\pi_{q}}{\tilde{L}} +  \frac{\pi_{r}}{\tilde{L}}.
	\end{equation}
	In contrast, if we assume that the two cache-allocations differ by $\ell$ such that $\ell < \tilde{L}$, we have
	\begin{align}
		T_{\text{p}}(\tilde{L}\!+\!\ell,\tilde{L}\!-\!\ell) &= \frac{\pi_{q}}{\tilde{L}+\ell} +  \frac{\pi_{r}}{\tilde{L}-\ell} \\
		&= \frac{(\pi_{q}-\pi_{r})\tilde{L}}{\tilde{L}^2-\ell^2} - \frac{(\pi_{q}-K_{r})\ell}{\tilde{L}^{2}-\ell^{2}}\\
		& < \frac{(\pi_{q}-\pi_{r})}{\tilde{L}} - \frac{(\pi_{q}-\pi_{r})\ell}{\tilde{L}^{2}} < T_{\text{p}}(\tilde{L},\tilde{L})\nonumber
	\end{align}
	which shows that it is always a better strategy to allocate higher cache redundancy to sub-libraries with higher cumulative probability.\hfill\qedsymbol

\subsection{Proof of Lemma~\ref{lemmaConsecutive}}\label{proofConsecutive}

We assume that $\{ \mathcal{B}_{q}\}_{q=1}^{Q} $ is an optimal library segmentation where, without loss of generality, $\pi_{q}\ge \pi_{q+1}, \forall q\in[Q\!-\!1]$.
In a different case we can rename the sub-libraries such that $\pi_{q}\ge \pi_{q+1}, \forall q\in[Q-1]$.
We pick two files, $W^{r_a}$, $W^{r_{b}}$, with corresponding popularity $p_{r_a}> p_{r_b}$, such that $W^{r_a}\in \mathcal{B}_{a}$ and $W^{r_b}\in \mathcal{B}_{b}$, while $\pi_{a} < \pi_{b}$, else we wouldn't have anything to prove.
Further, assuming that ${L}_{a}$ and $L_{b}>L_{a}$ correspond to the optimal cache-allocation of $\mathcal{B}_{a}$ and $\mathcal{B}_{b}$, respectively we can calculate the expected delay, $T_{1}$, of this sub-library segmentation and cache-allocation using \eqref{eqDeliveryTime1}.

Now, we can proceed to calculate the delay, $T_{2}$ of a similar system with the same cache-allocation as before, but now files $W^{r_a}, W_{r_{b}}$ are swapped, i.e. $W^{r_a}\in \mathcal{B}_{b}$ and $W^{r_b}\in \mathcal{B}_{a}$.

The difference of the two delays then takes the form
\begin{align}
	T_{1} - T_{2} &= \frac{p_{r_a} }{L_{a}} +  \frac{p_{r_b} }{L_{b}}  - \left( \frac{p_{r_b} }{L_{a}} +  \frac{p_{r_a} }{L_{b}} \right)\\
	&= ( p_{r_a} - p_{r_b}) \frac{L_{b}-L_{a}}{L_b \cdot L_a} >0.
\end{align}

Using the above result, and beginning from some arbitrary segmenation of the library we can select pairs of files which belong in different sub-libraries such that the probability of one file is higher than the probability of the other, while the more popular file resides in the less popular sub-library and swap them.
As we showed, performing this task will always transition the system to a lower delivery time.

Continuing to perform this task would result in a library segmentation where each sub-library has files of consecutive indices.\hfill\qedsymbol

\subsection{Convexity of~\eqref{eqDeliveryTime2} for fixed $Q,\mathbf{n}$\label{sec:BiconvexProof}}

{

Let us define the Hessian of $T^{\star}_{Q}$ with respect to $\mathbf{L}$ by $\mathbf{H}_1$, which is given by
\begin{align}\label{eqHessian}
\mathbf{H}_1 &= \begin{bmatrix}
\frac{\partial^2 T^*(Q)}{\partial L_1^2} & \frac{\partial^2 T^*(Q)}{\partial L_1 L_2} & \dots & \frac{\partial^2 T^*(Q)}{\partial L_1 L_Q}\\
\frac{\partial^2 T^*(Q)}{\partial L_2 L_1} & \frac{\partial^2 T^*(Q)}{\partial L_2^2} & \dots & \frac{\partial^2 T^*(Q)}{\partial L_2 L_Q}\\
\vdots & \vdots & \ddots & \vdots\\\
\frac{\partial^2 T^*(Q)}{\partial L_Q L_1} & \frac{\partial^2 T^*(Q)}{\partial L_Q L_2} & \dots & \frac{\partial^2 T^*(Q)}{\partial L_Q^2}
\end{bmatrix}.
\end{align}

Focusing on the $q$-th diagonal element of $\mathbf{H}_1$ we have
\begin{align}
	\frac{\partial^2 T^{\star}_{Q}}{\partial L_q^2} \!=\! \frac{\partial^2}{\partial L^{2}_{q}} {\bigg(n_1\! +\! \sum_{r = 2}^Q \frac{K(1-\gamma)\pi_r}{L_r(1+\Lambda\gamma)}\bigg)}
= 2 \frac{K\pi_{q}(1-\gamma)}{L_{q}^3(1+\Lambda\gamma)}\!>\!0\label{eq:hes1last}.
\end{align}

Similarly, we can show that the non-diagonal elements of $\mathbf{H}_1$ are equal to $0$.
Let us consider  arbitrary element $(q,s)$, $q\ne s$ for which we have
\begin{align*}
\frac{\partial^2 T^{\star}_{Q}}{\partial L_q \partial L_{s}} &= \frac{\partial^2}{\partial L_{q} \partial L_{s}} {\bigg(n_1 + \sum_{r=2}^{Q} \frac{K \pi_r (1-\gamma)}{L_{r}(1+\Lambda\gamma)}\bigg)}
\\
&= \frac{\partial}{\partial L_{s}} \bigg(-\frac{K \pi_q(1-\gamma)}{L_{q}^2(1+\Lambda\gamma)}\bigg)= 0.
\end{align*}

We can now conclude that $\mathbf{H}$ is positive semi-definite since its diagonal elements are positive, while its non-diagonal elements are zero.
Thus, function $T^{\star}_{Q,\mathbf{n}}$ is convex in $\mathbf{L}$.

Now, we continue with proving the convexity of $T^*(Q)$ in $\mathbf{n}$ for fixed $\mathbf{L}$. Since $p_j$ is defined only for discrete $j$, we replace the Zipf distribution with a continuous Pareto distribution,
\begin{equation*}
f(j) = j^{-\alpha} \underbrace{H(N,\alpha)^{-1}}_{\triangleq C},
\end{equation*}
where $H(N,\alpha)$ is the generalized Harmonic number.

Let us define the Hessian of $T^*(Q)$ with respect to $\mathbf{n}$ as $\mathbf{H}_2$, which is given similarly to \eqref{eqHessian}.

Following similar arguments used in showing the positive semi-definiteness of $\mathbf{H}_1$, we will show that $\mathbf{H}_2$ is also positive semi-definite.
It is trivial to show that first diagonal element of $\mathbf{H}_2$ is always non-negative. Let us consider a different diagonal element $q>1$, for which we get
\begin{align*}
\frac{\partial^2 T^{\star}_{Q}}{\partial n_q^2} &= -\frac{KC(1-\gamma)~ \alpha ~n_2^{-(\alpha+1)}}{1+\Lambda\gamma}\left[\frac{1}{L_q} - \frac{1}{L_{q+1}}\right]
&\geq 0,
\end{align*}
due to the fact that $L_q \geq L_{q+1},~\forall q \in [2,Q]$.

Since function~\eqref{eqDeliveryTime2} is a linear combination of terms, where each term is solely dependent on one of the $L_{q}$ variables it follows that a double partial differentiation over different $L_q, L_{k}$ would produce $0$.
Therefore, we conclude that $\mathbf{H}_2$ is positi0ve semi-definite and function $T^{\star}_{Q}$ is convex in $\mathbf{n}$ for fixed $\mathbf{L}$. Therefore, the function in \eqref{eqNewObjectiveFunction} is biconvex.

Finally, all the constraints are affine. Thus, Problem~\ref{prb:mainprb} for fixed $Q$ is a biconvex problem. \hfill\qedsymbol
}

\subsection{Proof of Lemma~\ref{theoremLagrange1}}\label{proofLagrange}

The output of the KKT conditions provides three different, and non-overlapping, subsets of $[Q]$.
The first set, $\phi$, is comprised of those $q\in[Q]$ for which $L_q=1$ (apart from $q=1$ for which, by definition, $L_1=1$).
The second set, $\psi$, is comprised of those $q$ for which $L_{q} = U_{q}$.
Finally, the remaining $q$ are contained in set $\chi$ and for these we need to calculate the cache-allocation variable $L_q$.

Hence, the expected delay can be written as
\begin{align}\nonumber
	T^{\star}(Q,\mathbf{n}) =& n_{1}  + \frac{K(1-\gamma)}{1+\Lambda\gamma}\sum_{q\in\phi}\pi_{q} + |\psi|\frac{\Lambda(1-\gamma)}{1+\Lambda\gamma} \\
		&+\frac{K(1-\gamma)}{1+\Lambda\gamma}\sum_{q\in\chi}\frac{\pi_{q}}{L_{q}}
\label{eqDelayPartOpt}.
\end{align}

The cache capacity constraint becomes
\begin{align}\nonumber
	&\sum_{q\in \chi } L_{q}(n_{q}-n_{q-1}) = \\
	&LN - \bigg[ n_{1}\! +\! \sum_{q \in \phi} (n_{q}\!-\!n_{q-1}) + \sum_{q\in \psi} U_{q}( n_{q}\!-\!n_{q-1}) \bigg]. \label{eqCapacityConst1}
\end{align}

Taking the derivative of~\eqref{eqLagrangian} with respect to $L_q$ such that $q\in\chi$, and equating it to $0$ yields
\begin{align}\label{eqPartialLagrangian}
	\frac{\partial \mathcal{L}}{\partial L_{q}} = -\frac{K \pi_{q}(1-\gamma)}{(1+\Lambda\gamma) L_{q}^{2}} + \lambda (n_{q}-n_{q-1}) = 0.
\end{align}
Separating $L_q$ from the remaining terms in \eqref{eqPartialLagrangian} yields
\begin{align}
	L_{q} &= \frac{1}{\sqrt{\lambda}} \sqrt{ \frac{ K(1-\gamma) \pi_{q}}{ (1+\Lambda\gamma)(n_{q}-n_{q-1})}}\\
	L_{q}(n_{q}-n_{q-1}) &= \frac{1}{\sqrt{\lambda}} \sqrt{ \frac{ K(1-\gamma) \pi_{q}(n_{q}-n_{q-1})}{ (1+\Lambda\gamma)}}.\label{eqDerivLq1}
\end{align}

Further, from \eqref{eqPartialLagrangian} keeping on one side terms $L_q, (n_q-n_{q-1}), \lambda$ we get
\begin{align}\label{eqDerivLq2}
	L_{q}(n_{q}-n_{q-1}) \lambda = \frac{K(1-\gamma)\pi_{q}}{(1+\Lambda\gamma)L_{q}}.
\end{align}

Summing \eqref{eqDerivLq1} and \eqref{eqDerivLq2} over all $q\in\chi$
\begin{align}\label{eqSumDerivLq1}
	\stackrel{\eqref{eqDerivLq1}}{\Rightarrow} \sum_{q\in\chi} L_{q}(n_{q}-n_{q-1}) &=  \frac{1}{\sqrt{\lambda}} \sum_{q\in\chi} \sqrt{ \frac{ K(1-\gamma) \pi_{q}(n_{q}-n_{q-1})}{ (1+\Lambda\gamma)}}\\
	\label{eqSumDerivLq2}
	\stackrel{\eqref{eqDerivLq2}}{\Rightarrow} \sum_{q\in\chi} L_{q}(n_{q}-n_{q-1}) &=\frac{1}{\lambda} \frac{K(1-\gamma)}{(1+\Lambda\gamma)}\sum_{q\in\chi} \frac{\pi_{q}}{L_{q}}.
\end{align}

Replacing $\lambda$ from \eqref{eqSumDerivLq2} to \eqref{eqSumDerivLq1} yields
\begin{align}
	\sum_{q\in\chi} L_{q}(n_{q}-n_{q-1}) &= \frac{ \frac{K(1-\gamma)}{1+\Lambda\gamma} \sum_{q\in\chi} \sqrt{ \pi_{q}(n_{q}-n_{q-1})}}
	{\frac{K(1-\gamma)}{(1+\Lambda\gamma)}\sum_{q\in\chi} \frac{\pi_{q}}{L_{q}}}\\
	\frac{K(1\!-\!\gamma)}{(1+\Lambda\gamma)}\sum_{q\in\chi} \frac{\pi_{q}}{L_{q}} &=
	\frac{ \frac{K(1-\gamma)}{1+\Lambda\gamma} \sum_{q\in\chi} \sqrt{ \pi_{q}(n_{q}\!-\!n_{q-1})}}
	{\sum_{q\in\chi} L_{q}(n_{q}\!-\!n_{q-1})}.
	\label{eqLambdaDivisions1}
\end{align}

Replacing the left-hand-side of \eqref{eqLambdaDivisions1} with \eqref{eqDelayPartOpt} and the denominator of \eqref{eqLambdaDivisions1} from \eqref{eqCapacityConst1} yields the result of Theorem~\ref{theoDelay}.\qedsymbol

\end{document}